\DeclareMathOperator*{\argmin}{argmin}
\renewcommand{\Re}{{\mathbb{R}}}
 \def\xx{{\boldsymbol{x}}}
 \def\B{{B}}
\def\X{{\boldsymbol{X}}}  
 \def\BB{{\bf B}}
\def\AA{{\boldsymbol{A}}}  
\def\Y{{\boldsymbol{Y}}}  
\def\PPhi{{\boldsymbol{\Phi}}}  
\def\zz{{\boldsymbol{z}}} 
 \def\yy{{\boldsymbol{y}}}
\def\bb{{\boldsymbol{b}}}  
\def\aa{{\boldsymbol{a}}}  
\def\cc{{\boldsymbol{c}}}  
\def\aa{{\boldsymbol{a}}}  
\def\Q{{\boldsymbol{Q}}}  
 \DeclareMathOperator*{\rank}{rank}
\def\subjto{{\mbox{subj. to}}}
\newcommand{\Ce}{\mathbb{C}} 
 \newtheorem{thm}{Theorem}
\DeclareMathOperator*{\soft}{soft}
\DeclareMathOperator*{\Tr}{Tr}
 \def\xx{{\boldsymbol{x}}}
 \def\B{{B}}
 \def\BB{{\bf B}}
 \def\zz{{\boldsymbol{z}}} 
 \def\yy{{\boldsymbol{y}}}
\def\bb{{\boldsymbol{b}}}  
\def\aa{{\boldsymbol{a}}}  
\def\subjto{{\mbox{subj. to}}}
\renewcommand{\Re}{{\mathbb{R}}}
\newcommand{\T}{\mathsf{T}}
\newcommand{\eg}{\textit{e.g.,~}}
\newcommand{\ie}{\textit{i.e.,~}}
\renewcommand{\H}{\mathsf{H}}
\newtheorem{df}[thm]{Definition}
\newtheorem{cor}[thm]{Corollary}
\def\Z{{\boldsymbol{Z}}} 
\def\I{{\boldsymbol{I}}} 
 \def\Bt{\tilde{B}}
 \def\bigO{\mathcal{O}}
\author{{Henrik~Ohlsson},~\IEEEmembership{Member,~IEEE},
Allen~Y.~Yang,~\IEEEmembership{Member,~IEEE},
{Roy~Dong},
{Michel Verhaegen},
S.~Shankar~Sastry,~\IEEEmembership{Fellow,~IEEE}
\thanks{ Ohlsson, Yang, Dong, and Sastry are with the Department of Electrical Engineering and Computer
  Sciences, University of California, Berkeley, CA, USA. Ohlsson is also with the
Division of Automatic Control, Department of Electrical Engineering, Link\"oping University, Sweden.
Verhaegen is with the Delft Center for Systems and Control, Delft University, Delft 2628CD, The Netherlands.
Corresponding author: Henrik Ohlsson, Cory Hall, University of California, Berkeley, CA 94720. Email: ohlsson@eecs.berkeley.edu.}                        
\thanks{The authors gratefully acknowledge support by the Swedish Research
  Council in the Linnaeus center CADICS, the European Research Council
   under the advanced grant LEARN, contract 267381, by a postdoctoral grant from the Sweden-America
   Foundation, donated by ASEA's Fellowship Fund, by a postdoctoral
   grant from the Swedish Research Council, and by ARO 63092-MA-II and
   DARPA FA8650-11-1-7153.}
\thanks{This paper was presented in part at NIPS 2012, Lake Tahoe,
  USA, Dec 3-6, 2012, \cite{Ohlsson:12}.}
}
\begin{document}

\title{Quadratic Basis Pursuit}

\maketitle

\begin{abstract}                         
In many compressive sensing problems today, the
relationship between the measurements and the unknowns could be
nonlinear. Traditional treatment of such nonlinear relationships have
been to approximate the nonlinearity via a linear model and the
subsequent un-modeled dynamics as noise. The ability to more
accurately characterize nonlinear models has the potential to improve
the results in both existing compressive sensing applications and
those where a linear approximation does not suffice, \eg phase
retrieval. In this paper, we extend the classical compressive sensing
framework to a second-order Taylor expansion of the
nonlinearity. Using a lifting technique and a method we call quadratic
basis pursuit, we show that the sparse
signal can be recovered exactly when the sampling rate is sufficiently
high. We further present efficient numerical algorithms to recover sparse signals in second-order nonlinear systems, which are considerably more difficult to solve than their linear counterparts in sparse optimization.
\end{abstract}


\section{Introduction}
\label{sec:introduction}
Consider the problem of finding the sparsest signal $\xx$ satisfying a system of linear equations:
\begin{equation}\label{eq:l0prob}
\begin{aligned}
\min_{\xx \in \Re^n} &\quad \|\xx\|_0\\
\subjto& \quad y_i =\bb_i^\T \xx,\quad y_i\in \Re,\, \bb_i \in \Re^n  ,\, i=1,\dots,N.
\end{aligned}\end{equation}
This problem is known to be combinatorial and NP-hard
\cite{Natarajan:95} and a number of approaches
to approximate its solution have been proposed. 
One of the most well known approaches is to relax the zero norm
and replace it with the $\ell_1$-norm:
\begin{equation}\label{eq:BP}
\min_{\xx \in \Re^n} \|\xx\|_{1} \quad \subjto\quad y_i= \bb_i^\T \xx,\quad i=1,\dots,N.
\end{equation}
This approach is often referred to as \textit{basis
pursuit} (BP) \cite{Chen:98}.

The ability to recover the optimal solution to \eqref{eq:l0prob} is essential 
in the theory of \textit{compressive sensing} (CS)
\cite{Candes:06,Donoho:06} and a tremendous 
amount of work has been dedicated to solving and analyzing the solution
of \eqref{eq:l0prob} and \eqref{eq:BP} in the last decade. Today CS is
regarded as a powerful tool in signal processing and widely used in
many applications. 
For a detailed review of the literature, the reader is referred to several recent publications such as \cite{bruckstein:09,Eldar:2012}.

It has recently been shown that 
CS can be extended to nonlinear models. More specifically, the relatively new
topic of \emph{nonlinear compressive sensing} (NLCS) deals with a
more general problem of finding the sparsest signal $\xx$ to a
nonlinear set of equations:
\begin{equation}
\begin{aligned}
\min_{\xx  \in \Re^n} & \quad \|\xx\|_0\\
\subjto & \quad  y_i =f_i(\xx),\quad y_i\in \Re,\, i=1,\dots,N,
\end{aligned}\label{eq:nonlinear}
\end{equation}
where each $f_i:\Re^n \rightarrow \Re$ is a continuously
differentiable function. Compared to CS, the literature on  NLCS  is still very limited. The interested reader is referred to \cite{Beck:2012,BlumensathT:2012} and references therein.

In this paper, we will restrict our attention from rather general
nonlinear systems, and instead focus on nonlinearities that depends
quadratically on the unknown $\xx$. More specifically, we consider the following problem formulated in the complex domain:
\begin{equation}
\begin{aligned}[rcl]
\min_{\xx \in \Ce^n} & \quad\| \xx\|_0\\
 \subjto & \quad  y_i =a_i+\bb_i^\H \xx+ \xx^\H\cc_i+\xx^\H \Q_i
\xx,\\ &\quad i=1,\dots,N, 
\end{aligned}
\label{eq:costraint}
\end{equation}
where $a_i\in \Ce$, $\bb_i, \cc_i\in \Ce^n$, $y_i\in \Ce$, and $\Q_i \in \Ce^{n\times n},\,
i=1,\dots,N$. 
In a sense, being able to solve \eqref{eq:costraint} would make it possible to apply the principles of CS to a
second-order Taylor expansion of the nonlinear relationship in
\eqref{eq:nonlinear}, while traditional CS mainly considers its linear
approximation or first-order Taylor expansion. In particular, in the most simple
case, when a second order  Taylor expansion  is taken around zero (\ie a Maclaurin
expansion), let $a_i=f_i(0)$, $\bb_i=\cc_i=\nabla_{\xx}^\T f_i(0)/2$ and
$\Q_i=\nabla^2_{\xx} f_i(0)/2$, $i=1,\dots,N$, with $\nabla_\xx$ and $\nabla^2_\xx$
denoting the gradient and Hessian with respect to $\xx$. In this case, $\Q$ is a Hermitian matrix. Nevertheless, we note that our derivations in the
paper does not depend on the matrix $\Q$ to be symmetric in the real domain or Hermitian in the complex domain.

In another motivating example, we  consider the well-known phase retrieval problem in 
x-ray crystallography, see for instance \cite{KohlerD1972,GonsalvesR1976,GerchbergR1972,FienupJ1982,MarchesiniS2007,BalanR2006}. The underlying principal of x-ray
crystallography is that the information about the crystal structure can be obtained from its diffraction pattern by hitting the crystal by an x-ray beam.
Due to physical limitations, typically only the intensity of the diffraction pattern can be measured but not its phase. This leads to
a nonlinear relation
\begin{equation}\label{eq:PR}
y_i=| \aa_i^\H \xx|^2=\xx^{\H} \aa_i \aa_i^\H \xx,\quad i=1,\dots,N,
\end{equation}
between the measurements $y_1,\dots,y_N \in \Re$ and the structural information
contained in $\xx\in \Ce^n$. The complex vectors $\aa_1, \dots, \aa_N \in \Ce^{n }$
are known and $\H$ denotes the conjugate transpose. The mathematical problem of recovering $\xx$ from
$y_1,\dots,y_N,$ and  $\aa_1, \dots, \aa_N$ is referred to as the
phase retrieval problem. The traditional phase retrieval problem is
known to be combinatorial \cite{Candes:11}. 

If $\xx$ is sparse under an appropriate basis in \eqref{eq:PR}, the problem is referred to as \textit{compressive phase retrieval} (CPR) in \cite{MoravecM2007,ohlsson:11m} or \textit{quadratic compressed sensing} (QCS) in \cite{Shechtman:11}. These algorithms can be applied to several important imaging applications, such as
diffraction imaging \cite{Bunk:07},
astronomy \cite{Dainty:87,Fienup:93}, optics \cite{WaltherA1963},
x-ray tomography \cite{Dierolf:10}, microscopy \cite{miao:08,Antonello:12,Szameit:12},
and quantum mechanics \cite{Corbett:06}, to mention a few. 
As we will later show, our solution as a convex relaxation of \eqref{eq:costraint}, called \emph{quadratic basis pursuit} (QBP), can be readily applied to solve this type of problems, namely, let $a_i=\bb_i=\cc_i=0,\,\Q_i=\aa_i \aa_i^\H,\, i=1,\dots,N$.


\subsection{Contributions}

The main contribution of this paper is a novel convex technique for
solving the sparse quadratic problem \eqref{eq:costraint}, namely, QBP. 
The proposed framework is not a greedy algorithm and inherits desirable properties, \eg 
perfect recovery, from BP and the traditional CS results. In
comparison, most of the existing solutions for sparse nonlinear problems
are greedy algorithms, and therefore their ability to give global convergence guarantees is limited.

Another contribution is an efficient numerical algorithm that solves the QBP problem and compares favorably to other existing sparse solvers in convex optimization.
The algorithm is based on \emph{alternating direction method of
  multipliers} (ADMM). Applying the algorithm to the complex CPR
problem, we show that the QBP approach achieves the state-of-the-art result compared to other phase retrieval solutions when the measurements are under-sampled.

In Section \ref{sec:QCS}, we will first develop the main theory of
QBP. In Section \ref{sec:algorithms}, we present the ADMM algorithm. Finally, in Section \ref{sec:experiments}, we conduct comprehensive experiments to validate the performance of the new algorithm on both synthetic and more practical imaging data.

\subsection{Literature Review}
To the best of our knowledge, this paper is the first work focusing on
recovery of sparse signals from systems of general quadratic
equations. Overall, the literature on nonlinear sparse problems and
NLCS is also very limited. One of the first papers discussing these
topics is \cite{eps151911}. They present a greedy gradient based
algorithm for estimating the sparsest solution to a general nonlinear
equation system.  A greedy approach was also proposed in \cite{Li09}
for the estimation of sparse solutions of nonlinear equation systems. The work of \cite{Beck:2012} proposed several iterative hard-thresholding and sparse simplex pursuit algorithms. As the algorithms are nonconvex greedy solutions, the analysis of the theoretical convergence only concerns about their local behavior. In \cite{BlumensathT:2012}, the author also considered a generalization of the \textit{restricted isometry property} (RIP) to support the use of similar iterative hard-thresholding algorithms for solving general NLCS problems.

Our paper is inspired by several recent works on CS applied to the phase retrieval
problem
\cite{MoravecM2007,Marchesin:09,Chai:10,Shechtman:11,ohlsson:11m,Osherovich:12,Szameit:12,Jaganathan12,schniter12,Shechtman13}. 
In particular, the generalization of compressive sensing
to CS was first proposed in \cite{MoravecM2007}. In \cite{Shechtman:11}, the problem was also referred to as QCS.
These methods typically do not consider a general quadratic constraint as in
\eqref{eq:costraint} but a pure quadratic form
(\ie $a_i=\bb_i=\cc_i=0,\,i=1,\dots,N$, in \eqref{eq:costraint}).

In terms of the numerical algorithms that solves the CPR problem, most of the existing methods are 
greedy algorithms, where a solution to the underlying non-convex
problem is sought by a sequence of local decisions
\cite{MoravecM2007,Marchesin:09,Shechtman:11,Osherovich:12,Szameit:12,Shechtman13}. In
particular, the QCS algorithm in \cite{Shechtman:11} used a \emph{lifting} technique similar to that in
\cite{shor87,Lovász91,Nesterov98,Goemans:1995} and \emph{iterative
  rank minimization} resulting in a series of semidefinite programs (SDPs) that would converge to a local optimum.

The first work that applied the lifting technique to the PR and CPR problems was presented in \cite{Chai:10}.
Extensions of similar techniques were also studied in
\cite{Li:2012,Jaganathan12}.
The methods presented in our previous publications \cite{Ohlsson:12,ohlsson:11m} were also based on the lifting technique.
It is important to note that the algorithms proposed in \cite{Chai:10,Ohlsson:12,ohlsson:11m} are non-greedy global solutions, which are different from the previous local solutions \cite{MoravecM2007,Shechtman:11}. 
Our work was inspired by the solutions to phase retrieval via low-rank approximation in \cite{Chai:10,Candes:11,Candes:11b}. 
Given an
oversampled phase retrieval problem, a
lifting technique was used to relax the nonconvex 
problem with a SDP. 
The authors of \cite{Candes:11,Candes:11b} also derived an upper-bound for the sampling
 rate that guarantees exact recovery in the noise-free case and stable
 recovery in the noisy case. Nevertheless, the work in
\cite{Candes:11,Candes:11b} only addressed the oversampled phase
retrieval problem but not CPR or NLCS. The only similarities
between our work and theirs are the
lifting technique and convex relaxation. This lifting technique has also been used in
other topics to convert nonconvex quadratic problems to SDPs, see for
instance \cite{Waldspurger12,Jaganathan12}. The work presented in
\cite{Chai:10} and our previous contributions
\cite{Ohlsson:12,ohlsson:11m} only discussed the CPR problem.

Finally, in \cite{schniter12}, a \emph{message passing}
algorithm similar to that in CS was proposed to solve the compressive
phase retrieval problem. The work in \cite{Eldar:12} further
considered stability and uniqueness in real phase retrieval problems.
CPR has also been shown useful in practice and we refer the interested
reader to \cite{MoravecM2007,Szameit:12} for two very nice
contributions. Especially fascinating we find the work presented in
\cite{Szameit:12}  where the authors show how CPR can be used to facilitate
sub-wavelength imaging  in microscopy.

\subsection{Notation and Assumptions}
In this paper, we will use bold face to denote vectors and matrices and normal font
for scalars. 
We denote
the transpose of a real vector by  $\xx^\T$ and the conjugate
transpose of a complex vector by $\xx^\H$.  $\X_{i,j}$ is used to denote the
$(i,j)$th element, $\X_{i,:}$ the $i$th row and  $\X_{:,j}$ the $j$th
column of a matrix $\X$, respectively. We will use the notation
$\X_{i_1:i_2,j_1:j_2}$ to denote a submatrix constructed from
rows $i_1$ to $i_2$ and columns  $j_1$ to $j_2$ of $\X$. Given two matrices $\X$ and $\Y$,
we use the following fact that their product in the trace function commutes, namely, $\Tr(\X \Y) = \Tr(\Y \X)$, under the assumption
that the dimensions match. $\| \cdot \|_0$ counts the number of nonzero elements in a vector or matrix; similarly, \linebreak $\| \cdot\|_1$ denotes the element-wise $\ell_1$-norm of a vector or matrix, \ie, the sum of the magnitudes of the elements; whereas $\| \cdot \|$ represents the $\ell_2$-norm for vectors and the spectral norm for matrices.

\section{Quadratic Basis Pursuit}
\label{sec:QCS}

\subsection{Convex Relaxation via Lifting}

As optimizing the $\ell_0$-norm function in \eqref{eq:costraint} is known to be a combinatorial problem, in this section, we first introduce a convex relaxation of  \eqref{eq:costraint}. 

It is easy  to see that the general quadratic constraint of \eqref{eq:costraint} can be rewritten as the quadratic form:
\begin{equation}
y_i  =\begin{bmatrix} 1 &\xx^\H  \end{bmatrix} \begin{bmatrix} a_i &
  \bb_i^\H \\ \cc_i&
  \Q_i \end{bmatrix} \begin{bmatrix} 1\\ \xx \end{bmatrix} \in \Ce, \quad i=1,\dots,N.
\end{equation}
Since each $y_i$ is a scalar, we further have
\begin{align}
y_i  =&\Tr \left( \begin{bmatrix} 1 &\xx^\H  \end{bmatrix} \begin{bmatrix} a_i & \bb_i^\H \\ \cc_i&
  \Q_i \end{bmatrix} \begin{bmatrix} 1\\ \xx \end{bmatrix} \right)
\\=&\Tr \left(\begin{bmatrix} a_i & \bb_i^\H \\ \cc_i &
  \Q_i \end{bmatrix} \begin{bmatrix} 1\\ \xx \end{bmatrix} \begin{bmatrix} 1 &\xx^\H  \end{bmatrix} \right).
\end{align}
Define $\PPhi_i=\begin{bmatrix} a_i & \bb_i^\H \\ \cc_i & \Q_i \end{bmatrix} $ 
and $\X=\begin{bmatrix} 1\\ \xx \end{bmatrix} \begin{bmatrix} 1 &
  \xx^\H \end{bmatrix} $,  both matrices  of dimensions $(n+1) \times
(n+1)$. The operation that constructs $\X$ from the vector
$\begin{bmatrix} 1\\ \xx \end{bmatrix}$ is known as the \emph{lifting}
operator \cite{shor87,Lovász91,Nesterov98,Goemans:1995}. By
definition, $\X$ is a Hermitian matrix, and it satisfies the constraints that $\X_{1,1}=1$ and $\text{rank}(\X)=1$. Hence, \eqref{eq:costraint} can be rewritten as
\begin{equation}\label{eq:firstcond}
\begin{array}{rl}
\min_{\X} & \|\X\|_0 \\
\subjto\quad &y_i =\Tr( \PPhi_i \X),\quad i=1,\dots,N,\\
&\text{rank}(\X)=1, \X_{1,1}=1, \, \X \succeq 0.
\end{array}
\end{equation}
When the optimal solution $\X^*$ is found, the unknown $\xx$ can
be obtained by the rank-1 decomposition of $\X^*$ via \emph{singular
value decomposition} (SVD). 

The above problem is still non-convex and combinatorial. Therefore, solving it for any
moderate size of $n$ is impractical. Inspired by recent
literature on matrix completion \cite{CandesR:08,Chai:10,Candes:11,Candes:11b} and sparse PCA \cite{Aspremont:07}, we relax the
problem into the following convex \emph{semidefinite program} (SDP):
\begin{equation}\label{eq:noiseless-SDP}
\begin{array}{rl}
\min_{\X} &\Tr(\X)+\lambda \|\X\|_1 \\
\subjto \quad &y_i =\Tr ( \Phi_i \X ),\quad i=1,\dots,N,\\
&\X_{1,1}=1,\, \X \succeq 0,
\end{array}
\end{equation}
where $\lambda \geq 0$ is a design parameter. In particular, the trace
of $\X$ is a convex surrogate of the low-rank condition and  $\|X\|_1$ is
the well-known convex surrogate for $\|\X\|_0$ in
\eqref{eq:firstcond}. We refer to the approach
as \textit{quadratic basis pursuit} (QBP).

One can further consider a noisy counterpart of the QBP problem, where 
some deviation between the measurements and the estimates is allowed.
More specifically, we propose the following \textit{quadratic basis pursuit denoising} (QBPD) problem:
\begin{equation}\label{eq:noisy-SDP}
\begin{array}{rl}
\min_{\X} &\Tr(\X)+\lambda \|\X\|_1 \\
\subjto \quad &\sum_i^N  \|y_i -\Tr ( \PPhi_i \X ) \|^2 \leq \varepsilon,\\
&\X_{1,1}=1,\, \X \succeq 0,
\end{array}
\end{equation}
for some $\varepsilon > 0$.
 
\subsection{Theoretical Analysis}
In this section, we highlight some theoretical results derived for QBP. 
The analysis follows that of CS, and is inspired by derivations given in
\cite{Candes:11,Candes:06,Chai:10,Donoho:06,Candes_2008,berinde:08,bruckstein:09}. 
For further analysis on special cases of QBP and its noisy counterpart
QBPD, please refer to \cite{ohlsson:11m}.

First, it is convenient to introduce a linear operator $\B$:
\begin{equation}
\B: \X\in \Ce^{n\times n} \mapsto \{\Tr (\PPhi_i \X) \}_{1\le i \le N}\in\Ce^{N}.
\label{eq:definition-B}
\end{equation}
 We consider a generalization of the \emph{restricted isometry property} (RIP) of the linear operator $\B$. 
\begin{df}[\bf RIP]\label{def:RIP}
A linear operator $\B(\cdot) $ as defined in \eqref{eq:definition-B} is $(\epsilon,
k)$-RIP 
if 
\begin{equation}\label{eq:RIP}
\left |{ \frac{\| \B(\X) \|^2}{\| \X \|^2}} -1
  \right |<\epsilon
\end{equation} for all $\|\X\|_0 \leq k$ and $\X\neq 0.$
\end{df}
We can now state the following theorem:
\begin{thm}[\bf Recoverability/Uniqueness]\label{thm:rec}
Let $\bar \xx \in \Ce^{n}$ be a solution to \eqref{eq:costraint}. 
If $\X^* \in \Ce^{(n+1)\times (n+1)}$ satisfies
$\yy= \B( \X^*), \, \X^* \succeq 0,\,
\rank(\X^*)=1,\, \X^*_{1,1}=1$ and if $\B(\cdot)$  is a $(\epsilon,2 \| \X^*\|_0)$-RIP linear
operator with  $\epsilon<1$ 
then  $\X^*$ and $\bar \xx$ are unique and  $\X^*_{2:n+1,1}= \bar \xx$.
\end{thm}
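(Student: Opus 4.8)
The plan is to reduce the statement to a standard null-space uniqueness argument for the sparse lifted matrices, with the real work being the bookkeeping that relates the sparsity of a lifted matrix to that of its generating vector. First I would lift the given $\ell_0$-minimizer: set $\bar\X = \begin{bmatrix} 1 \\ \bar\xx \end{bmatrix}\begin{bmatrix} 1 & \bar\xx^\H \end{bmatrix}$, which is Hermitian, positive semidefinite, of rank one, has $\bar\X_{1,1}=1$, and by construction satisfies $\B(\bar\X) = \yy$. The goal is to show $\X^* = \bar\X$, from which $\X^*_{2:n+1,1} = \bar\xx$ follows immediately.

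Next I would exploit the rank-one PSD structure of $\X^*$. Since $\X^* \succeq 0$ and $\rank(\X^*)=1$, we may write $\X^* = \zz\zz^\H$; the constraint $\X^*_{1,1}=1$ forces $|z_1| = 1$, and after absorbing the unit-modulus phase of $z_1$ into $\zz$ we obtain $\X^* = \begin{bmatrix} 1 \\ \xx^* \end{bmatrix}\begin{bmatrix} 1 & (\xx^*)^\H \end{bmatrix}$ with $\xx^* = \X^*_{2:n+1,1}$. Reading off $y_i = \Tr(\PPhi_i \X^*)$ then shows that $\xx^*$ satisfies the quadratic constraints of \eqref{eq:costraint}, so $\xx^*$ is itself a feasible solution. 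Because $\bar\xx$ is a sparsest such solution, this yields $\|\bar\xx\|_0 \le \|\xx^*\|_0$.

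The crux is the sparsity count. A lift $\begin{bmatrix}1 \\ \xx\end{bmatrix}\begin{bmatrix}1 & \xx^\H\end{bmatrix}$ has exactly $(\|\xx\|_0+1)^2$ nonzero entries, and $t \mapsto (t+1)^2$ is increasing, so $\|\bar\xx\|_0 \le \|\xx^*\|_0$ implies $\|\bar\X\|_0 \le \|\X^*\|_0$. By the triangle inequality for supports, $\|\X^* - \bar\X\|_0 \le \|\X^*\|_0 + \|\bar\X\|_0 \le 2\|\X^*\|_0$. Since both matrices satisfy $\B(\cdot) = \yy$, their difference lies in the null space of $\B$ and has sparsity within the RIP order $2\|\X^*\|_0$. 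The $(\epsilon, 2\|\X^*\|_0)$-RIP with $\epsilon<1$ gives $\|\B(\X)\|^2 > (1-\epsilon)\|\X\|^2 > 0$ for every nonzero $\X$ of this sparsity; hence $\B(\X^* - \bar\X)=0$ is possible only if $\X^* - \bar\X = 0$, so $\X^* = \bar\X$ and $\X^*_{2:n+1,1} = \bar\xx$.

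Finally, uniqueness follows by the same device: any competing sparsest solution $\bar\xx'$ lifts to $\bar\X'$ with $\|\bar\X'\|_0 = \|\bar\X\|_0 \le \|\X^*\|_0$, so $\B(\bar\X - \bar\X') = 0$ with $\|\bar\X - \bar\X'\|_0 \le 2\|\X^*\|_0$ again forces $\bar\X = \bar\X'$ and thus $\bar\xx = \bar\xx'$; uniqueness of $\X^*$ then follows, since every admissible $\X^*$ has been shown to coincide with the now-unique lift $\bar\X$. I expect the only genuine obstacle to be the sparsity bookkeeping, namely verifying that lifting squares the support size and that the resulting difference therefore lands inside the guaranteed RIP order, while the RIP-injectivity step is routine.
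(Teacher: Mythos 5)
Your proof is correct and follows essentially the same route as the paper's: lift $\bar \xx$ to $\bar \X$, establish $\|\bar \X\|_0 \le \|\X^*\|_0$, use the support triangle inequality to place $\X^* - \bar\X$ among $2\|\X^*\|_0$-sparse matrices in the null space of $\B$, and invoke the $(\epsilon, 2\|\X^*\|_0)$-RIP with $\epsilon < 1$ to force that difference to vanish. The only differences are cosmetic --- you argue directly where the paper argues by contradiction, and you explicitly supply the rank-one decomposition, phase absorption, feasibility of $\X^*_{2:n+1,1}$, and the $(\|\xx\|_0+1)^2$ support count that the paper compresses into ``it is clear that.''
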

\begin{proof}
Assume the contrary \ie $\X^*_{2:n+1,1} \neq \bar \xx$ and hence that $\X^* \neq \begin{bmatrix}1\\ \bar
  \xx \end{bmatrix} \begin{bmatrix} 1 & \bar \xx^\H \end{bmatrix}$.
It is clear that  $ \left \|\begin{bmatrix}1\\ \bar
  \xx \end{bmatrix} \begin{bmatrix} 1 & \bar \xx^\H \end{bmatrix} \right \|_0 \leq \|\X^*\|_0$ and
hence  $\left \|\begin{bmatrix}1\\ \bar
  \xx \end{bmatrix} \begin{bmatrix} 1 & \bar \xx^\H \end{bmatrix} -
\X^*  \right \|_0 \leq 2 \|\X^*\|_0$. Since
$\left \|\begin{bmatrix}1\\ \bar
  \xx \end{bmatrix} \begin{bmatrix} 1 & \bar \xx^\H \end{bmatrix} -
\X^*  \right \|_0 \leq 2 \|\X^*\|_0$, we can apply the RIP
inequality \eqref{eq:RIP} on $\begin{bmatrix}1\\ \bar
  \xx \end{bmatrix} \begin{bmatrix} 1 & \bar \xx^\H \end{bmatrix}  -
\X^*$. If we use that  $\yy = \B(\X^*) = \B \left  (\begin{bmatrix}1\\ \bar
  \xx \end{bmatrix} \begin{bmatrix} 1 & \bar \xx^\H \end{bmatrix}
\right )$ and hence $\B \left  (\begin{bmatrix}1\\ \bar
  \xx \end{bmatrix} \begin{bmatrix} 1 & \bar \xx^\H \end{bmatrix}
-\X^*\right )=0$, we are led to the contradiction
$1<\epsilon$. We therefore conclude that $\X^* = \begin{bmatrix}1\\ \bar
  \xx \end{bmatrix} \begin{bmatrix} 1 & \bar \xx^\H \end{bmatrix}$,
$\X^*_{2:n+1,1}= \bar \xx$ and that  $\X^*$ and $\bar \xx$ are unique.
\end{proof} 
We can also give a bound on the sparsity of $\bar \xx$:
\begin{thm}[\bf Bound on {\small $\left \| \bar
  \xx \right \|_0$} from above]\label{thm:Phaseliftrel1}
Let $\bar \xx$ be the sparsest solution to \eqref{eq:costraint}  and let $\tilde \X $ be
the solution of  QBP \eqref{eq:noiseless-SDP}.  If $\tilde \X$ has
rank 1 then  $\| \tilde \X_{2:n+1,1}\|_0\geq \| \bar
  \xx \|_0$.
\end{thm}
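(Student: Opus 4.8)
The plan is to exploit the fact that a rank-one feasible point of QBP collapses back into a feasible point of the original quadratic program \eqref{eq:costraint}, and then to invoke the defining optimality of $\bar \xx$ as the sparsest such point. The whole argument is a feasibility-plus-optimality reduction, so I do not expect a serious obstacle; the only place requiring a little care is the clean extraction of the rank-one factor with the correct normalization.

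First I would use the structural constraints carried by $\tilde \X$. Since $\tilde \X$ is Hermitian, positive semidefinite, and of rank one, it admits a factorization $\tilde \X = \zz \zz^\H$ for some $\zz \in \Ce^{n+1}$. The constraint $\tilde \X_{1,1}=1$ forces $|z_1|^2 = 1$, so in particular $z_1 \neq 0$, and after absorbing the unit-modulus scalar $z_1$ into the factor I can write $\tilde \X = \begin{bmatrix}1 \\ \tilde \xx \end{bmatrix}\begin{bmatrix} 1 & \tilde \xx^\H \end{bmatrix}$ with $\tilde \xx := \tilde \X_{2:n+1,1}$. This is exactly the lifting structure introduced in Section \ref{sec:QCS}, and I would note that the arbitrary unit-modulus phase of $z_1$ does not affect $\|\tilde \X_{2:n+1,1}\|_0$, so the quantity in the statement is well defined.

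Next I would verify feasibility of $\tilde \xx$ for \eqref{eq:costraint}. Substituting the factorization into the measurement constraints $y_i = \Tr(\PPhi_i \tilde \X)$ of QBP \eqref{eq:noiseless-SDP} and using the cyclic property of the trace gives, for each $i$,
\[
y_i = \Tr(\PPhi_i \tilde \X) = \begin{bmatrix} 1 & \tilde \xx^\H \end{bmatrix} \PPhi_i \begin{bmatrix} 1 \\ \tilde \xx \end{bmatrix} = a_i + \bb_i^\H \tilde \xx + \tilde \xx^\H \cc_i + \tilde \xx^\H \Q_i \tilde \xx,
\]
which is precisely the constraint of \eqref{eq:costraint}. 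Hence $\tilde \xx$ is a feasible point of the original sparse quadratic problem.

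Finally, since $\bar \xx$ is by definition the sparsest feasible point of \eqref{eq:costraint}, every feasible vector must be at least as dense, so $\|\bar \xx\|_0 \leq \|\tilde \xx\|_0$; and because $\tilde \xx = \tilde \X_{2:n+1,1}$, this is exactly the claimed bound $\|\tilde \X_{2:n+1,1}\|_0 \geq \|\bar \xx\|_0$. I would emphasize that the theorem assumes (rather than establishes) that $\tilde \X$ has rank one, so no argument about tightness of the convex relaxation is needed here; that question is orthogonal to this bound.
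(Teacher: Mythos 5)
Your proposal is correct and follows essentially the same route as the paper's own proof: the paper likewise observes that a rank-one $\tilde\X$ makes $\tilde\X_{2:n+1,1}$ feasible for \eqref{eq:costraint} and then invokes the optimality of $\bar\xx$, merely phrasing it as a contradiction rather than directly. The only difference is that you spell out the step the paper leaves implicit --- the factorization $\tilde\X = \zz\zz^\H$ with $|z_1|=1$, the phase absorption, and the trace computation verifying feasibility --- which is a useful tightening but not a different argument.
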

\begin{proof}
Let  $\tilde \X $ be a rank-1  solution of  QBP  \eqref{eq:noiseless-SDP}.
By contradiction, assume $\| \tilde \X_{2:n+1,1}\|_0  < \| \bar \xx \|_0$. Since $\tilde \X_{2:n+1,1}$ satisfies the constraints of
\eqref{eq:costraint}, it is a feasible solution of
\eqref{eq:costraint}. As assumed,  $\tilde \X_{2:n+1,1}$ also
gives a lower objective value than $\bar \xx $ in
  \eqref{eq:costraint}. This is a contradiction since $\bar \xx $ was assumed to be
 the solution of  \eqref{eq:costraint}. Hence we must have that $\|\tilde
 \X_{2:n+1,1}\|_0 \geq \|
 \bar \xx  \|_0$.
\end{proof}
The following result now holds trivially:
\begin{cor}[\bf Guaranteed recovery using RIP]\label{thm:guartee1}
Let $\bar \xx$ be the sparsest solution to \eqref{eq:costraint}. 
 The solution of QBP $\tilde \X$ is equal to  $\begin{bmatrix}1\\ \bar
  \xx \end{bmatrix} \begin{bmatrix} 1 & \bar \xx^\H \end{bmatrix} $ if  it
has rank 1 and
$\B(\cdot) $ is ($\epsilon, 2\|\tilde \X\|_0$)-RIP with $\epsilon<1$.
\end{cor}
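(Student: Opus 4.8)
The plan is to recognize that the corollary is a direct specialization of Theorem~\ref{thm:rec}, obtained by taking $\X^{*}=\tilde\X$. Accordingly, the whole argument reduces to checking that the QBP optimizer $\tilde\X$ satisfies every hypothesis imposed on $\X^{*}$ there, after which the conclusion is read off verbatim.

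First I would record that $\tilde\X$, being a feasible point of the QBP program \eqref{eq:noiseless-SDP}, automatically meets the measurement constraints $y_i=\Tr(\PPhi_i\tilde\X)$ for $i=1,\dots,N$, \ie $\yy=\B(\tilde\X)$, together with $\tilde\X_{1,1}=1$ and $\tilde\X\succeq 0$. The two remaining hypotheses of Theorem~\ref{thm:rec}, namely $\rank(\tilde\X)=1$ and the $(\epsilon,2\|\tilde\X\|_0)$-RIP of $\B$ with $\epsilon<1$, are exactly the standing assumptions of the corollary. Since $\bar\xx$ is the sparsest solution of \eqref{eq:costraint}, it serves as the ``solution to \eqref{eq:costraint}'' required by Theorem~\ref{thm:rec}.

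With these identifications in place, Theorem~\ref{thm:rec} applies with $\X^{*}=\tilde\X$ and delivers not only $\tilde\X_{2:n+1,1}=\bar\xx$ but, as its proof shows, the full matrix identity
\[
\tilde\X=\begin{bmatrix}1\\ \bar\xx\end{bmatrix}\begin{bmatrix}1 & \bar\xx^\H\end{bmatrix},
\]
which is precisely the assertion of the corollary; uniqueness of $\tilde\X$ comes along for free.

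There is essentially no obstacle, which is why the result is flagged as holding trivially: all the analytic work has already been done inside Theorem~\ref{thm:rec}. The only point I would be careful to state explicitly is the bookkeeping that justifies invoking the RIP inequality \eqref{eq:RIP} at the chosen order---namely that the difference $\begin{bmatrix}1\\ \bar\xx\end{bmatrix}\begin{bmatrix}1 & \bar\xx^\H\end{bmatrix}-\tilde\X$ has $\ell_0$-norm at most $2\|\tilde\X\|_0$, so that the assumed $(\epsilon,2\|\tilde\X\|_0)$-RIP is strong enough. This is the same sparsity-counting step carried out in the proof of Theorem~\ref{thm:rec}, while Theorem~\ref{thm:Phaseliftrel1} is what ensures that the recovered block $\tilde\X_{2:n+1,1}$ cannot be sparser than $\bar\xx$, keeping the two interpretations mutually consistent.
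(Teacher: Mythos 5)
Your proposal is correct and follows exactly the paper's route: the paper likewise proves the corollary by observing that the QBP solution $\tilde\X$ satisfies all the hypotheses placed on $\X^*$ in Theorem~\ref{thm:rec} and then reading off the conclusion. Your added care in noting that the full matrix identity $\tilde\X=\begin{bmatrix}1\\ \bar\xx\end{bmatrix}\begin{bmatrix}1 & \bar\xx^\H\end{bmatrix}$ comes from the proof of Theorem~\ref{thm:rec} (not merely its stated conclusion $\X^*_{2:n+1,1}=\bar\xx$) is a faithful, slightly more explicit rendering of the same argument.
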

\begin{proof}
This follows trivially from Theorem \ref{thm:rec} by realizing that
$\tilde \X$ satisfy all properties of $\X^*$.
\end{proof}
 Given the RIP analysis, it may be  that the linear operator
 $\B(\cdot)$ does satisfy the RIP property
 defined in Definition \ref{def:RIP} with a small enough $\epsilon$, as pointed out in \cite{Candes:11}.
 In these cases, RIP-1  may be considered:
 \begin{df}[\bf RIP-1]\label{def:RIP1}
 A linear operator $\B(\cdot)$  is $(\epsilon,
 k)$-RIP-1 if 
 \begin{equation}
  \left| { \frac{\| \B(\X)\|_1}{\| \X\|_1}}
 -1\right|<\epsilon
 \end{equation}
 for all matrices $\X \neq 0$ and $\|\X\|_0 \leq k$.
 \end{df} 
Theorems \ref{thm:rec}--\ref{thm:Phaseliftrel1} and
Corollary~\ref{thm:guartee1}  all hold
with RIP replaced by
RIP-1 and will not be restated in detail
here. Instead, we summarize the most  important property in the following theorem:  
\begin{thm}[\bf Upper bound and recoverability using RIP-1]\label{thm:bound}
Let $\bar \xx$ be the sparsest solution to \eqref{eq:costraint}. 
 The solution of QBP \eqref{eq:noiseless-SDP}, $\tilde \X$, is equal to  $\begin{bmatrix}1\\ \bar
  \xx \end{bmatrix} \begin{bmatrix} 1 & \bar \xx^\H \end{bmatrix} $ if  it
has rank 1 and
$B(\cdot) $ is ($\epsilon, 2\|\tilde \X\|_0$)-RIP-1 with $\epsilon<1$.
\end{thm}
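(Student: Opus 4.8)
The plan is to recognize that Theorem~\ref{thm:bound} is simply the RIP-1 counterpart of Corollary~\ref{thm:guartee1}, which the text has already announced holds with RIP replaced by RIP-1. Concretely, Corollary~\ref{thm:guartee1} follows from Theorem~\ref{thm:rec} by checking that the QBP solution $\tilde\X$ inherits every hypothesis imposed on $\X^*$ there; so I would carry out the proof of Theorem~\ref{thm:rec} verbatim, but with the RIP inequality~\eqref{eq:RIP} of Definition~\ref{def:RIP} replaced throughout by the RIP-1 inequality of Definition~\ref{def:RIP1}, and then specialize the resulting statement to $\tilde\X$. For brevity I write $\bar\X$ for the lifted matrix $\begin{bmatrix}1\\ \bar\xx\end{bmatrix}\begin{bmatrix}1 & \bar\xx^\H\end{bmatrix}$.

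First I would verify the hypotheses. Since $\tilde\X$ solves QBP~\eqref{eq:noiseless-SDP} it is feasible, i.e.\ $\yy=\B(\tilde\X)$, positive semidefinite, and normalized by $\tilde\X_{1,1}=1$; together with the assumption $\rank(\tilde\X)=1$ these are exactly the properties demanded of $\X^*$. From $\tilde\X\succeq0$, $\rank(\tilde\X)=1$ and $\tilde\X_{1,1}=1$ one obtains a factorization $\tilde\X=\begin{bmatrix}1\\ \zz\end{bmatrix}\begin{bmatrix}1 & \zz^\H\end{bmatrix}$ (the unit $(1,1)$ entry fixes the global phase), so its rank-1 factor $\zz$ is a feasible point of~\eqref{eq:costraint}. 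Optimality of the sparsest solution $\bar\xx$ then gives $\|\bar\xx\|_0\le\|\zz\|_0$, and since lifting a vector with $s$ nonzeros produces a matrix with $(s+1)^2$ nonzeros, this monotone relation yields $\|\bar\X\|_0\le\|\tilde\X\|_0$, precisely the sparsity bound used inside Theorem~\ref{thm:rec}.

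The contradiction step is the core of the argument. Suppose $\tilde\X\neq\bar\X$ and set $\boldsymbol{D}=\bar\X-\tilde\X\neq0$. Subadditivity of $\|\cdot\|_0$ together with $\|\bar\X\|_0\le\|\tilde\X\|_0$ gives $\|\boldsymbol{D}\|_0\le 2\|\tilde\X\|_0$, so $\boldsymbol{D}$ falls in the sparsity range covered by the $(\epsilon,2\|\tilde\X\|_0)$-RIP-1 assumption. Because $\tilde\X$ and $\bar\X$ reproduce the same data, linearity of $\B$ gives $\B(\boldsymbol{D})=\yy-\yy=0$, whence $\|\B(\boldsymbol{D})\|_1=0$ while $\|\boldsymbol{D}\|_1>0$. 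Feeding these into the RIP-1 inequality of Definition~\ref{def:RIP1} produces $\bigl|\,0-1\,\bigr|=1<\epsilon$, contradicting $\epsilon<1$. Hence $\tilde\X=\bar\X$, which is the claim.

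The only genuinely load-bearing piece of bookkeeping---and the step I would write out most carefully---is the sparsity accounting that places $\boldsymbol{D}$ within the $2\|\tilde\X\|_0$ budget: identifying the rank-1 factor $\zz$ as feasible so that optimality of $\bar\xx$ applies, noting that lifting preserves the $\|\cdot\|_0$ ordering between a vector and its outer product, and invoking subadditivity of $\|\cdot\|_0$. No new analytic estimate is needed, since the contradiction only exploits that a nonzero matrix annihilated by $\B$ forces the normalized ratio to vanish; the $\ell_1$ and $\ell_2$ versions of that ratio are structurally identical, so replacing $\|\cdot\|$ by $\|\cdot\|_1$ changes nothing in the logic.
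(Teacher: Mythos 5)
Your proof is correct and takes essentially the same route as the paper: the paper's own proof of Theorem~\ref{thm:bound} simply states that it ``follows trivially from the proof of Theorem~\ref{thm:rec},'' i.e.\ rerun that argument with the RIP inequality of Definition~\ref{def:RIP} replaced by the RIP-1 inequality of Definition~\ref{def:RIP1}, which is precisely the substitution you carry out. Your extra bookkeeping---the rank-1 factorization of $\tilde\X$ forced by $\tilde\X\succeq 0$, $\tilde\X_{1,1}=1$, the $\|\cdot\|_0$ ordering preserved by lifting, and subadditivity giving $\|\bar\X-\tilde\X\|_0\le 2\|\tilde\X\|_0$---merely spells out steps the paper leaves implicit, and is accurate.
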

\begin{proof}
The proof follows trivially from the proof of  Theorem \ref{thm:rec}.
\end{proof} 
The RIP-type argument may be difficult to check for a given matrix and
are more useful for claiming results for classes of matrices/linear
operators. For instance, it has  been shown that random Gaussian matrices satisfy the RIP  with
high probability. However, given realization of a random Gaussian matrix,
it is indeed difficult to check if it actually satisfies the RIP.  
Two alternative arguments are the \emph{spark condition} \cite{Chen:98} and the \emph{mutual coherence} \cite{Donoho:03b,CANDES:2009}.
The spark condition usually gives tighter bounds but is known to be difficult to
compute as well. On the other hand, mutual coherence may give less tight bounds,
but is more tractable. We will focus on mutual coherence, which is defined~as: 
\begin{df}[\bf Mutual coherence]
 For a matrix $\AA$,  define  the \textit{mutual coherence}  as
 \begin{equation}
 \mu(\AA)=\max_{1\leq i,j \leq n, i\neq j} {  \frac{|\AA_{:,i}^\H
  \AA_{:,j} |}{\|\AA_{:,i}\|\|\AA_{:,j}\|}}.
  \end{equation}
\end{df}
Let $\BB$ be the matrix satisfying $\yy= \BB \X^s=\B(\X)$ with $\X^s$ being the vectorized version of $\X$. We are now ready to state the following theorem:
\begin{thm}[\bf Recovery using mutual coherence]\label{thm:guartee2}
Let $\bar \xx$ be the sparsest solution to \eqref{eq:costraint}. 
 The
solution of QBP  \eqref{eq:noiseless-SDP}, $\tilde \X$, is equal to  $\begin{bmatrix}1\\ \bar
  \xx \end{bmatrix} \begin{bmatrix} 1 & \bar \xx^\H  \end{bmatrix} $ if  it
has rank 1 and
$\|\tilde \X\|_0 < 0.5(1+1/\mu(\BB )).$
\end{thm}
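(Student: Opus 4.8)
The plan is to reduce the statement to the classical mutual-coherence uniqueness guarantee of compressive sensing, applied to the \emph{linearized} system $\yy = \BB\,\X^s$. By construction the operator $\B(\cdot)$ and the matrix $\BB$ encode the same linear map, and vectorization preserves the count of nonzero entries, so $\|\X\|_0 = \|\X^s\|_0$ for every matrix $\X$. It therefore suffices to show that the affine feasible set $\{\Z : \yy = \BB\,\Z^s\}$ contains at most one point below the prescribed sparsity level, and then to verify that both the lifted ground truth and the rank-1 QBP minimizer fall below that level.

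First I would establish the spark--coherence inequality $\text{spark}(\BB) \ge 1 + 1/\mu(\BB)$ and convert it into a uniqueness statement. After normalizing the columns of $\BB$ to unit $\ell_2$-norm (which alters neither spark nor coherence), any $k$ of its columns form a Gram matrix with unit diagonal and off-diagonal magnitudes at most $\mu(\BB)$; by the Gershgorin circle theorem this matrix is strictly diagonally dominant, hence nonsingular, as soon as $(k-1)\mu(\BB) < 1$, which yields linear independence of every set of fewer than $1 + 1/\mu(\BB)$ columns. Consequently, if two feasible matrices $\Z_1,\Z_2$ each satisfy $\|\Z_j\|_0 < 0.5\,(1 + 1/\mu(\BB))$, their difference lies in $\ker\BB$ with $\|\Z_1^s - \Z_2^s\|_0 \le \|\Z_1\|_0 + \|\Z_2\|_0 < 1 + 1/\mu(\BB) \le \text{spark}(\BB)$, and a nonzero null-space vector cannot involve so few columns, forcing $\Z_1 = \Z_2$.

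It then remains to check that the two relevant feasible points meet the threshold. The lifted solution $\X_0 = \begin{bmatrix}1\\ \bar\xx\end{bmatrix}\begin{bmatrix}1 & \bar\xx^\H\end{bmatrix}$ is feasible for QBP, and the rank-1 minimizer $\tilde\X$, being positive semidefinite with $\tilde\X_{1,1}=1$, is itself a lifting of $\tilde\X_{2:n+1,1}$. A lifting of an $s$-sparse vector has exactly $(s+1)^2$ nonzero entries, so Theorem \ref{thm:Phaseliftrel1}, which gives $\|\tilde\X_{2:n+1,1}\|_0 \ge \|\bar\xx\|_0$, yields $\|\X_0\|_0 \le \|\tilde\X\|_0 < 0.5\,(1 + 1/\mu(\BB))$. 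Both $\X_0$ and $\tilde\X$ thus lie below the threshold, so by the uniqueness argument $\tilde\X = \X_0$, and reading off the first column gives $\tilde\X_{2:n+1,1} = \bar\xx$.

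The main obstacle is less the algebra than the bookkeeping that links the matrix problem to the vector problem. I must make sure the rank-1, positive-semidefinite, $\tilde\X_{1,1}=1$ constraints genuinely force $\tilde\X$ into lifted form, so that the $(s+1)^2$ nonzero count and the comparison through Theorem \ref{thm:Phaseliftrel1} apply, and that the column normalization in the Gershgorin step interacts cleanly with the complex field and with the structure of $\BB$ inherited from the $\PPhi_i$. Working over $\Ce$ rather than $\Re$ is routine, but the identity $\|\X\|_0 = \|\X^s\|_0$ and the sparsity comparison via Theorem \ref{thm:Phaseliftrel1} are the load-bearing links and deserve the most care.
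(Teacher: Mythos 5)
Your proof is correct and follows essentially the same route as the paper: reduce to the classical mutual-coherence uniqueness guarantee for the vectorized linear system $\yy = \BB\,\X^s$, note that the lift of $\bar\xx$ is a feasible point at least as sparse as $\tilde\X$ (via Theorem~\ref{thm:Phaseliftrel1} and the $(s+1)^2$ nonzero count of a rank-1 lifting), and conclude equality by uniqueness below the coherence threshold. The only difference is one of self-containment: the paper simply cites \cite{Donoho:03b} and \cite[Thm.~5]{bruckstein:09} for the uniqueness step and asserts the ``sparsest rank-1 solution'' comparison by definition, whereas you prove the spark--coherence bound via Gershgorin and make the lifted-form bookkeeping explicit --- a more detailed write-up of the same argument.
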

\begin{proof}
It follows from
\cite{Donoho:03b} \cite[Thm.~5]{bruckstein:09}
that if
\begin{equation}\label{eq:conduniquenessco}
\| \tilde \X\|_0 < \frac{1}{2} \left ( 1 + \frac{1}{\mu(\BB )}\right)
\end{equation}
then $\tilde \X$ is the sparsest solution to $\yy=\B(\X)$. Since $\begin{bmatrix}1\\ \bar
  \xx \end{bmatrix} \begin{bmatrix} 1 & \bar \xx^\H  \end{bmatrix} $ is by definition the
sparsest rank 1 solution to $\yy=\B(\X)$, it follows that $\tilde \X= \begin{bmatrix}1\\ \bar
  \xx \end{bmatrix} \begin{bmatrix} 1 & \bar \xx^\H  \end{bmatrix}.$  
\end{proof}

\section{Numerical Algorithms}
\label{sec:algorithms}

In addition to the above analysis of guaranteed recovery properties, a
critical issue for practitioners is the efficiency of numerical
solvers that can handle moderate-sized SDP problems. 
Several numerical solvers used in CS may 
be applied to solve nonsmooth SDPs, which include interior-point methods, \eg used in CVX \cite{CVX1}, gradient projection methods
\cite{BertsekasD1999}, and augmented Lagrangian methods (ALM)
\cite{BertsekasD1999}. However, interior-point methods are known to
scale badly to moderate-sized convex problems in general. Gradient
projection methods also fail to meaningfully accelerate  QBP  due to the complexity of the projection
operator. Alternatively, nonsmooth SDPs can be solved by ALM. However, the
augmented primal and dual objective functions are still SDPs,
which are equally expensive to solve in each iteration. 
There also exist a family of iterative approaches, often referred to as \textit{outer
approximation methods}, that  successively approximate
the solution of an SDP by solving a sequence of linear programs  (see \cite{Konno2002}).
These methods approximate the positive semidefinite cone by a set of
linear constraints and refine the approximation in each iteration by
adding a new set of linear constraints. However, we have experienced slow
convergence using these type of methods. In summary, QBP as a nonsmooth SDP is categorically more expensive to solve compared to the
linear programs underlying CS, and the task exceeds the capability of many popular sparse optimization techniques.

In this paper, we propose a novel solver to the nonsmooth SDP underlying QBP via the
\emph{alternating directions method of multipliers} (ADMM, see for
instance \cite{boyd:11} and \cite[Sec. 3.4]{BertsekasParallel})
technique. The motivation to use ADMM is  two-fold:  
\begin{enumerate} 
\item It
scales well to large data sets. \item It is known for its fast
convergence. \end{enumerate}
 There are also a number of strong convergence results
which further motivates the choice \cite{boyd:11}.

To set the stage for ADMM, let $n$ denote the dimension of $\xx$, and let $N$ denote the number of measurements. Then, rewrite \eqref{eq:noiseless-SDP} to the
equivalent SDP
\begin{equation} \label{eq:ADMM}
\begin{array}{rl}
\min_{\X_1,\X_2,\Z} & f_1(\X_1) + f_2(\X_2) + g(\Z), \\
\subjto &\quad \X_1 - \Z = 0, \quad
 \X_2 - \Z = 0,
\end{array}
\end{equation}
where $\X_1 = \X_1^\H \in \Ce^{(n+1) \times (n+1)}$, $\X_2 = \X_2^\H \in \Ce^{(n+1) \times (n+1)}$, $\Z = \Z^\H \in \Ce^{(n+1) \times (n+1)}$, and
\begin{align*}
f_1(\X)   \triangleq & 
\begin{cases}
\Tr(\X) & \mbox{if } y_i =\Tr(\PPhi_i \X),\: i = 1,\dots,N 
\\ & \quad \mbox{and } \X_{1,1} = 1 \\
\infty &  \mbox{otherwise} 
\end{cases}\\
f_2(\X)  \triangleq &
\begin{cases}
0 & \mbox{if } \X \succeq 0 \\
\infty & \mbox{otherwise} 
\end{cases} \\ g(\Z) \triangleq  &\lambda \|\Z\|_1.
\end{align*}
%
%
%

Define two matrices $\Y_1$ and $\Y_2$ as the Lagrange multipliers of the two equality constraints in \eqref{eq:ADMM}, respectively. 
Then the update rules of ADMM lead to the following: 
\begin{equation}\label{eq:ADMMiter}
\begin{array}{rcl}
\X_i^{l+1} & = & \arg\min_{\X = \X^\H} f_i(\X) 
 + \Tr(\Y_i^{l} (\X - \Z^l) ) \\ &+&  \frac{\rho}{2} \|\X - \Z^l\|^2,  \\
\Z^{l+1} & = & \arg\min_{\Z = \Z^\H} g(\Z)  +  \sum_{i = 1}^2 \Tr(\Y_i^{l} \Z) \\ &+&  \frac{\rho}{2}\|\X_i^{l+1} - \Z\|^2, \\
\Y_i^{l+1} & = & \Y_i^l + \rho (\X_i^{l+1} - \Z^{l+1}),
\end{array}
\end{equation}
for $i = 1,2$, where $\rho \geq 0$ is a parameter that enforces consensus between $\X_1$, $\X_2$, and $\Z$.
Each of these steps has a tractable calculation.
After some simple manipulations, we have:
\begin{equation}
\begin{array}{rl}
\X_1^{l+1} = \argmin_{\X = \X^\H}&  \| \X - (\Z^l - \frac{\I + \Y_1^{l}}{\rho}) \|,
\\ \subjto &  \quad y_i = \Tr(\PPhi_i \X),\quad i = 1, \dots, N,
\\ & \quad \X_{1,1} = 1.
\end{array} \label{eq:X1}
\end{equation}

Let $\Bt: \Ce^{(n+1) \times (n+1)} \rightarrow \Ce^{N+1}$ be the augmented
linear operator such that $\Bt(\X) = \begin{bmatrix}\B(\X)\\
  \X_{1,1}\end{bmatrix}$, where $B$ is the linear operator defined by  \eqref{eq:definition-B}.
Assuming that a feasible solution exists, and defining $\Pi_{\Bt}$ as
the orthogonal
projection onto the convex set given by the linear constraints, \ie
$\begin{bmatrix}\yy \\ 1\end{bmatrix} = \Bt(\X)$, the
solution is: $\X_1^{l+1}  = \Pi_{\Bt} (\Z^l - { \frac{\I + \Y_1^{l}}{\rho}}).$
This matrix-valued problem can be solved by converting the linear
constraint on Hermitian matrices into an equivalent constraint on
real-valued vectors.

Next, 
\begin{equation}
\X_2^{l+1}  =  \argmin_{\X \succeq 0}  \left\| \X - \left(\Z^l - {
      \frac{\Y_2^{l}}{\rho}}\right) \right\| = \Pi_{PSD} \left (\Z^l -
  { \frac{\Y_2^{l}}{\rho}} \right),
\label{eq:X2}
\end{equation}
where $\Pi_{PSD}$ denotes the orthogonal projection onto the
positive-semidefinite cone, which can easily be obtained via
eigenvalue decomposition.

Finally, let $\overline \X^{l+1} =  \frac{1}{2} \sum_{i = 1}^2 \X_i^{l+1}$ and similarly $\overline \Y^l$. Then, the $\Z$ update rule can be written:
\begin{equation}
\begin{array}{rl}
\Z^{l+1}  =&  \argmin_{\Z = \Z^\T} \lambda \|\Z\|_1 + \rho \| \Z -
(\overline \X^{l+1} + \frac {\overline \Y^l} {\rho}) \|^2  \\ =&  \soft(\overline \X^{l+1} + \frac {\overline \Y^l} {\rho}, {\frac{\lambda}{2\rho}})
\end{array}
\end{equation}
where $\soft(\cdot)$ in the complex domain is defined with respect to a positive real scalar $q$ as:
\begin{equation}
\label{eq:softthres}
\soft(x,q) = \begin{cases}
0 & \mbox{if } |x| \leq q, \\
\frac{|x| - q}{|x|}x & \mbox{otherwise}.
\end{cases}
\end{equation}
Note that if the first argument is a complex value, the $\soft$ operator
is defined in terms of the magnitude rather than the sign and if it is
a matrix, the  the $\soft$ operator acts element-wise. 

Setting $l = 1, \X_1^l = \X_2^l = \Z^l = I$, where $I$ denotes the identity matrix, and $\rho^l = 1$, setting $l=0$, the Hermitian matrices  $\X_i^{l+1},\Z_i^{l+1},\Y_i^{l+1}$ can now
be iteratively computed using the ADMM iterations
\eqref{eq:ADMMiter}. The stopping criterion of the algorithm is given by:
\begin{align}
\|r^l\|  \leq &  n \epsilon^{abs} + \epsilon^{rel} \max(\| \overline{\X}^l \|, \|\Z^l\|), \\
\|s^l\|  \leq & n \epsilon^{abs} + \epsilon^{rel} \| \overline{\Y}^l \|,
\end{align}
where $\epsilon^{abs}, \epsilon^{rel}$ are algorithm parameters set to $10^{-3}$ and $r^l$ and $s^l$ are the primal and dual residuals, respectively, as:
\begin{align}
r^l  =  &\begin{bmatrix} \X_1^l - \Z^l & \X_2^l - \Z^l\end{bmatrix},
\\ s^l  =&  -\rho \begin{bmatrix} \Z^l - \Z^{l-1} & \Z^l - \Z^{l-1}\end{bmatrix}.\end{align}

We also update $\rho$ according to the rule discussed in \cite{boyd:11}:
\begin{equation}
\begin{array}{rcl}
\rho^{l+1} & = &
\begin{cases}
\tau_{incr} \rho^l & \mbox{if } \|r^l\| > \mu \|s^l\|, \\
\rho^l / \tau_{decr} & \mbox{if } \|s^l\| > \mu \|r^l\|, \\
\rho^l & \mbox{otherwise},
\end{cases}
\end{array}
\end{equation}
where $\tau_{incr}$, $\tau_{decr}$, and $\mu$ are algorithm
parameters. Values commonly used are $\mu = 10$ and $\tau_{incr} =
\tau_{decr} = 2$.

In terms of the computational complexity of the ADMM algorithm, its inner loop calculates the updates of $\X_i$, $\Z$, and $\Y_i$, $i=1,2$. It is easy to see that its complexity is dominated by \eqref{eq:X1} and \eqref{eq:X2}, which is bounded by $\bigO(n^2 N^2 + n^3)$, while the calculation of $\Z$ and $\Y_i$ is linear with respect to the number of their elements.

\section{Experiments}
\label{sec:experiments}
In this section, we provide comprehensive experiments to validate the
efficacy of the QBP algorithms in solving several representative
nonlinear CS which depends quadratically on the unknown. We compare their performance primarily with two existing
algorithms. As we mentioned in Section \ref{sec:introduction}, if an
underdetermined nonlinear system is approximated up to the first
order, the classical sparse solver in CS is basis pursuit. In NLCS
literature, several greedy algorithms have been proposed for nonlinear
systems. In this section, we choose to compare with the
\textit{iterative hard thresholding} (IHT) algorithm in
\cite{Beck:2012} in Section \ref{sec:1Dsimulation} and another greedy
algorithm demonstrated in \cite{Szameit:12} in Section
\ref{sec:subwavelength}.\footnote{Besides the comparisons shown here,
  we have also compared to a number of CPR algorithms \cite{MoravecM2007,Shechtman13}. Not surprisingly,
they performed badly on the general quadratic problems since they
do not account for the linear term.}

 \subsection{Nonlinear Compressive Sensing in Real Domain}
 \label{sec:1Dsimulation}
 In this experiment, we illustrate the concept of nonlinear compressive
 sensing. Assume that there is a cost associated with sampling and
 that we would like to recover  $\zz_0 \in \Re^{m}$, related to our
 samples $y_i \in \Re,\,i=1,\dots,N,$ via
\begin{equation}\label{eq:nonlinearexp}
y_i=f_i(\zz_0),\quad  i=1,\dots,N,
\end{equation}
using as few samples as possible.
Also, assume that there is a sparsifying basis ${\bf D} \in \Re^{m \times n}$, possibly overcomplete,
such that  
\begin{equation}
\zz_0 =  {\bf D} \xx_0,\quad  \text{with } \xx_0 \text{ sparse.}  
\end{equation}
Hence, we have
\begin{equation}\label{eq:nonlinearexp2}
y_i=f_i( {\bf D} \xx_0 ),\quad  i=1,\dots,N,
\end{equation}
with $\xx_0$ a sparse vector.
If we  approximate the nonlinear equation system
\eqref{eq:nonlinearexp2} using a second order Maclaurin expansion we
endup with a set of quadratic equations, 
\begin{equation}\label{eq:nonlinearexp3}
y_i=f_i(0)  + \nabla f_i( 0)  {\bf D}   \xx_0+ \xx_0^\T{\bf D}^\T \frac{\nabla^2 f_i(0)}{2}  {\bf D} \xx_0,\;  i=1,\dots,N.
\end{equation}
Hence, we can use QBP to recover $\xx_0$ given $\{ f_i(\xx), y_i \}_{i=1}^N$
and $\bf D$. 

In particular, let ${\bf D}=\I$, $n=m=20$, $N=25$,  $f_i(\xx)=a_i+\bb_i^\T \xx+\xx^\T \Q_i\xx,\, i=1,\dots,N$,
 and generate $\{y_i\}_{i=1}^{N}$ by sampling $\{a_i,\bb_i,\Q_i
 \}_{i=1}^N$ from a unitary Gaussian distribution. Let $\xx_0$ be a
 binary vector with three elements different than zero.
Given $\{y_i, a_i,\bb_i,\Q_i \}_{i=1}^N$, the task is now to recover $\xx_0$. The results of this simulation
are shown in Figure \ref{fig:1D}.

\begin{figure}[h!]
 \centering
\subfigure[Ground truth.]{\includegraphics[width=0.49\columnwidth]{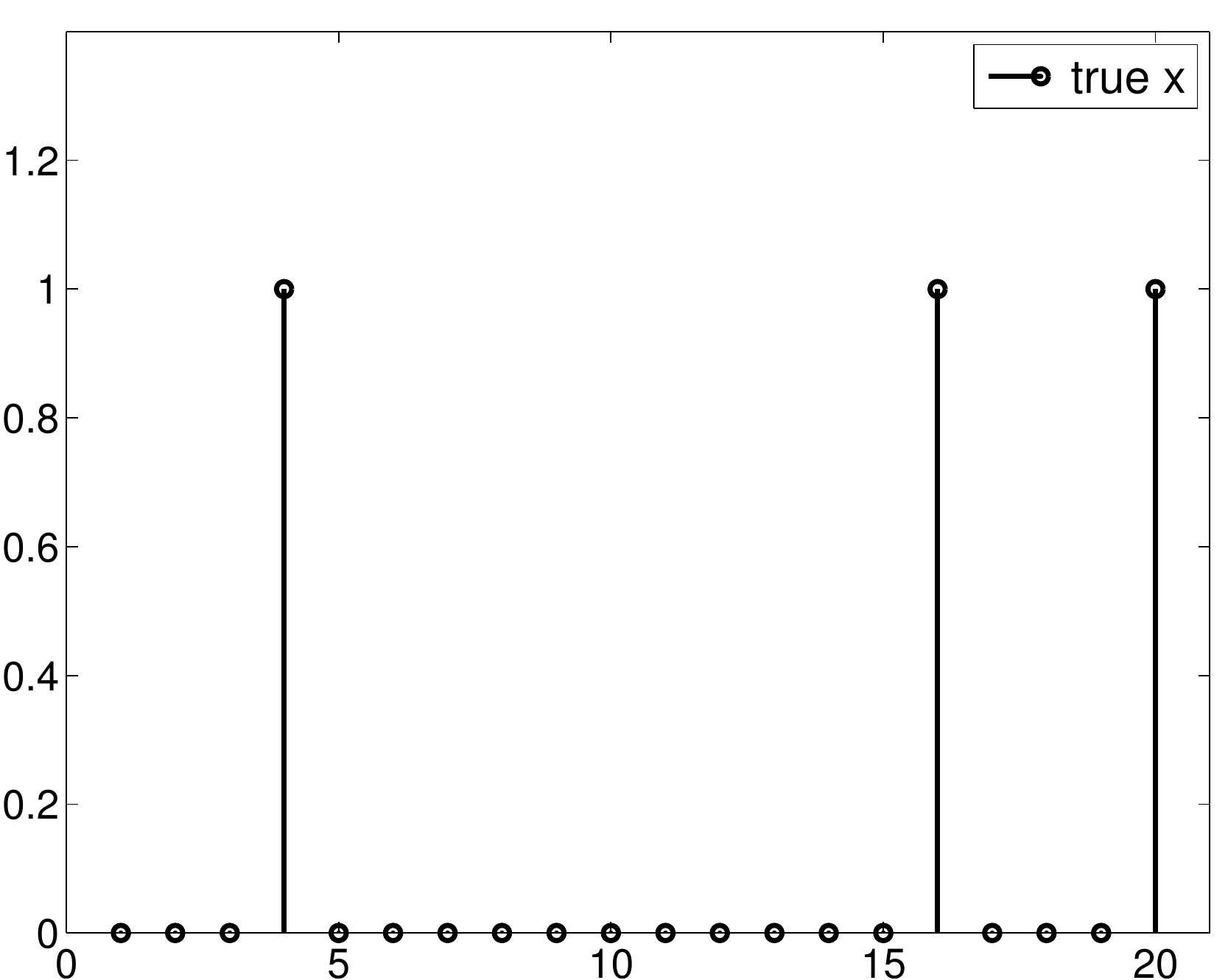}}\\
\subfigure[QBP with $\lambda = 50$.]{\includegraphics[width=0.49\columnwidth]{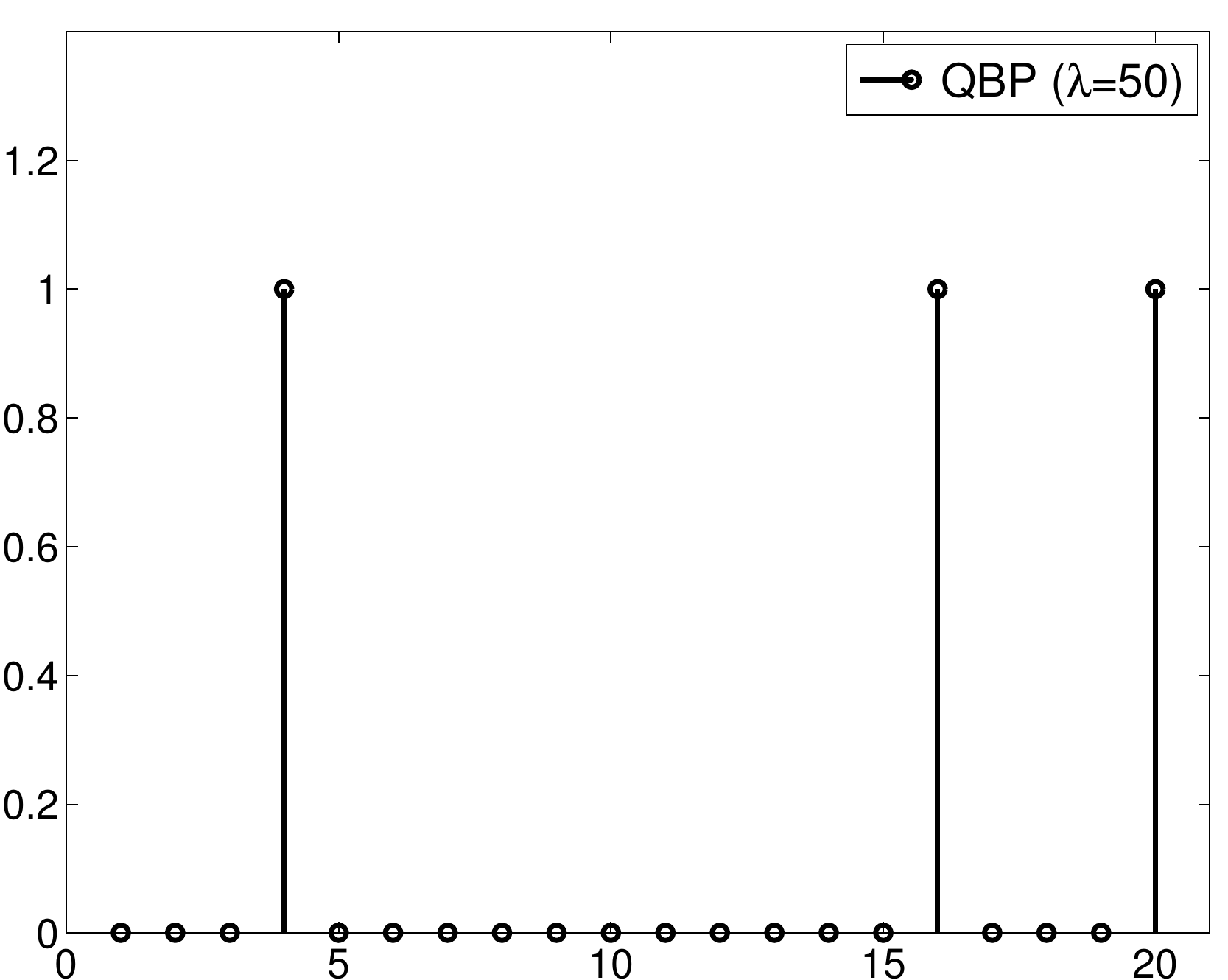}}
\subfigure[QBP with $\lambda = 0$.]{\includegraphics[width=0.49\columnwidth]{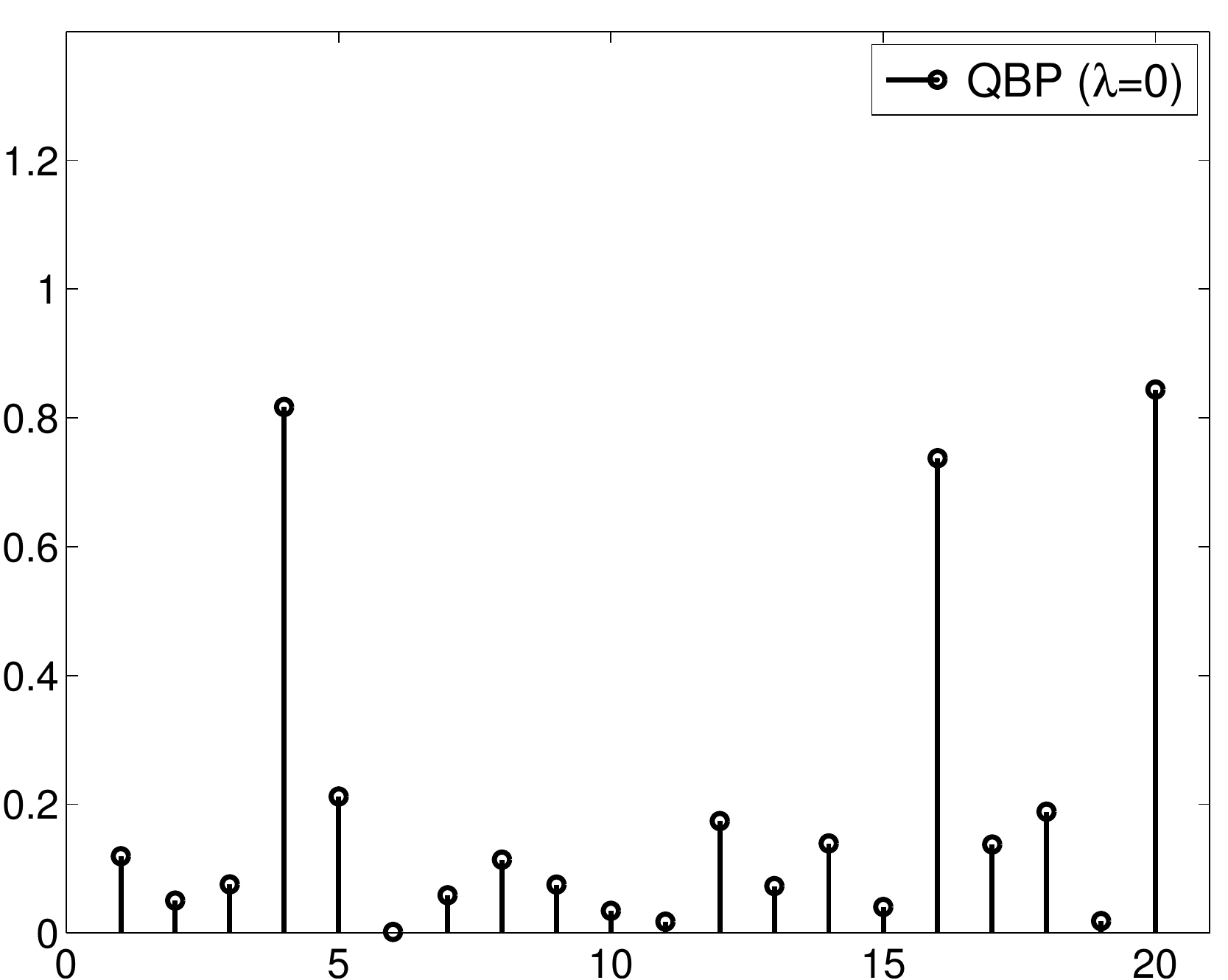}}\\
\subfigure[Basis pursuit.]{\includegraphics[width=0.49\columnwidth]{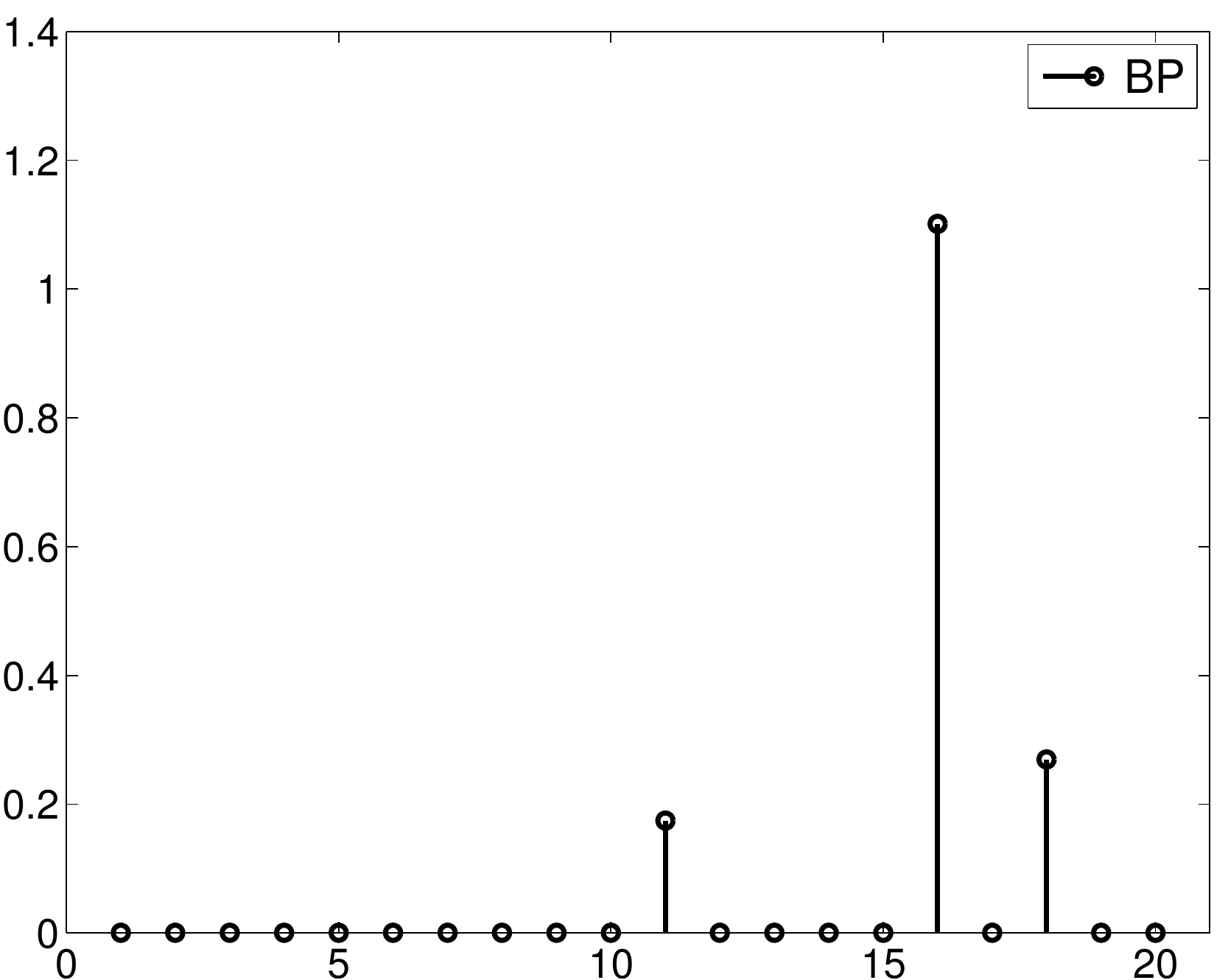}}
\subfigure[Iterative hard thresholding.]{\includegraphics[width=0.49\columnwidth]{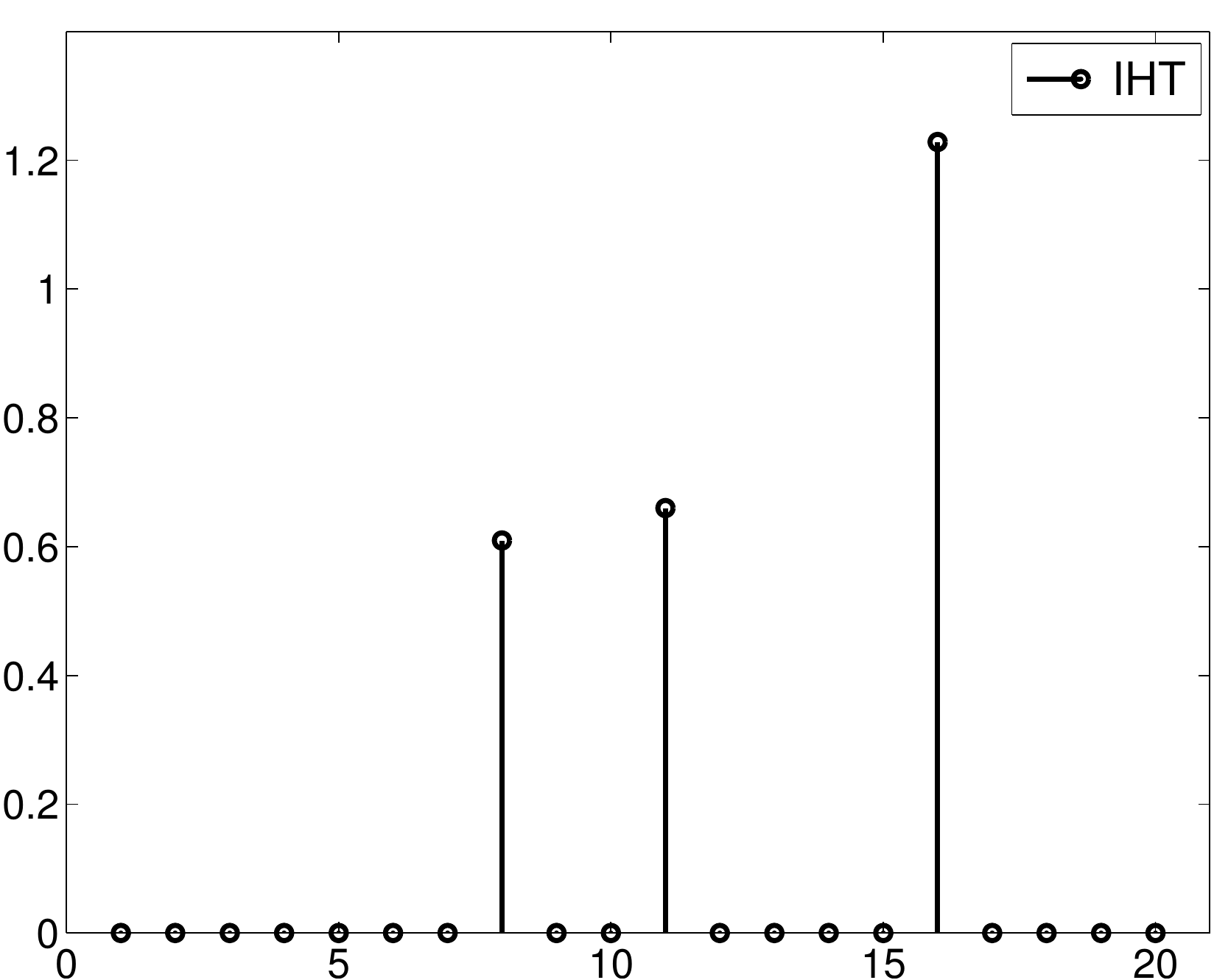}}
 \caption{Estimated 20-D sparse signals measured in a simulated quadratic system of equations.
The QBP solution perfectly recovers the ground truth with $\lambda=50$, while the remaining algorithms
fail to recover the correct sparse coefficients.}\label{fig:1D}
 \end{figure} 

First, as the noiseless measurements are generated by a quadratic system of equations, it is not surprising that QBP perfectly
recovers the sparse signal $\xx_0$ when $\lambda=50$. One may wonder whether in the 25-D ambient space, the solution $\xx_0$ is unique.
To show that the solution is not unique, we let
$\lambda=0$ and again apply QBP. As shown in Figure \ref{fig:1D} (c), the solution is dense and it also satisfies the quadratic constraints.
Therefore, we have verified that the system is underdetermined and there exist multiple solutions.

Second, in Figure \ref{fig:1D} (d), we approximate \eqref{eq:nonlinearexp3} only up to the first order and set $\Q_i=0, i=1,\dots,N$. The approximation enables us to employ the classical basis pursuit algorithm in CS to seek the best 3-sparse estimate $\xx$. As expected, the approximation is not accurate enough, and the estimate is far from the ground truth.

Third, we implement the iterative hard thresholding (IHT) algorithm in \cite{Beck:2012}, and the correct number of nonzero coefficients in $\xx_0$ is also provided to the algorithm. Its estimate is given in Figure \ref{fig:1D} (e). As IHT is a greedy algorithm, its performance is affected by the initialization. In Figure \ref{fig:1D} (e), the initial value is set by $\xx=0$, and the estimate is incorrect.

Finally, we note that the advantage of using general CS theory is that fewer samples are needed to recover a source signal from its observations. This remains true for NLCS presented in this paper. However, as \eqref{eq:nonlinearexp} and \eqref{eq:nonlinearexp3} are nonlinear equation systems, typically  $N\gg m$ measurements  are required for recovering a unique solution. In the same simulation shown in Figure \ref{fig:1D}, one could ignore the sparsity constraint (\ie, by letting $\lambda=0$ in Figure \ref{fig:1D} (c)), and it would require $N'=40$ observations for QBP to recover the unique solution, which is exactly the ground-truth signal.

Clearly, Figure \ref{fig:1D} is only able to illustrate one set of simulation results. To more systematically demonstrate the accuracy of the four algorithms in probability, a Monte
Carlo simulation is performed that repeats the above simulation but with different randomly generated $\xx_0$ and $\{a_i,\bb_i,\Q_i\}$.
Table~\ref{tab:first} shows the rates of successful recovery. We can
see QBP achieves the highest success rate, which is followed by
IHT. BP and the dense QBP solution basically fail to return enough
good results. $\lambda=50$ was used in all trials.

\begin{table}[h!]
\caption{The percentage of correctly recovering $\xx_0$ in 100 trials.}
\label{tab:first}
\begin{center}
{ \begin{tabular}{|l|cccc|}
\hline
Method & \text{QBP} ($\lambda=50$)&\text{QBP} ($\lambda=0$) & \text{BP}&\text{IHT} 
\\ \hline 
Success rate &   79\%  & 5\%  &3\%  &54\%\\
\hline
\end{tabular} }
\end{center}
\end{table}


\subsection{The Shepp-Logan Phantom}\label{sec:shepp}
In this experiment, we consider recovery of 
images from random samples. More specifically, we formulate an example
of the CPR problem in the QBP framework using the Shepp-Logan
phantom. Our goal is to show that using the QBPD algorithm  provides approximate solutions that are visually close to the ground-truth images.

Consider the ground-truth image in Figure \ref{fig:shepplogan}. This $30 \times 30$ Shepp-Logan phantom has a 2D
Fourier transform with 100 nonzero complex  coefficients.  We generate $N$
linear combinations of pixels, and then measure the square of the measurements. This relationship can be written as:
\begin{equation}
\yy = |{\bf A}\xx|^2 = \{\xx^\H \aa_i  \aa_i^\H\xx \}_{1\le i \le N},
\end{equation}
where ${\bf A}={\bf R}{\bf F}$ is the concatenation of a random matrix ${\bf R}$ and the Fourier basis ${\bf F}$, and the image ${\bf F}\xx$ is represented as a stacked vector in the 900-D complex domain. The CPR problem minimizes the following objective function:
\begin{equation}
\min_\xx\| \xx\|_1 \quad \subjto \quad \yy = | {\bf A}\xx |^2 \in\Re^N.
\label{eq:CPR}
\end{equation} 

Previously, an SDP solution to the non sparse phase retrieval problem was proposed in \cite{Candes:11}, which is called \textit{PhaseLift}. 
In a sense, PhaseLift can be viewed as a special case of the  QBP solution in \eqref{eq:noiseless-SDP} where $\lambda=0$, namely, the sparsity constraint is not enforced. In Figure \ref{fig:shepplogan} (b), the recovered result using PhaseLift is shown with $N=2400$.

To compare visually the performance of the QBP solution when the sparsity constraint is properly enforced, two recovered results are shown in Figure \ref{fig:shepplogan} (c) and (d) with  $N=2400$ and $1500$, respectively. Note that the number of measurements with respect to the sparsity in $\xx$ is too low for both QBP and PhaseLift to
perfectly recover $\xx$. Therefore, in this case, we employ the noisy version of the algorithm QBPD to recover the image. Wee can clearly see from the illustrations that QBPD provides a much better approximation and outperforms PhaseLift visually even though it uses considerably fewer measurements.
\begin{figure}[h!]
 \centering
 \subfigure[Ground truth]{\includegraphics[width = 0.4\columnwidth]{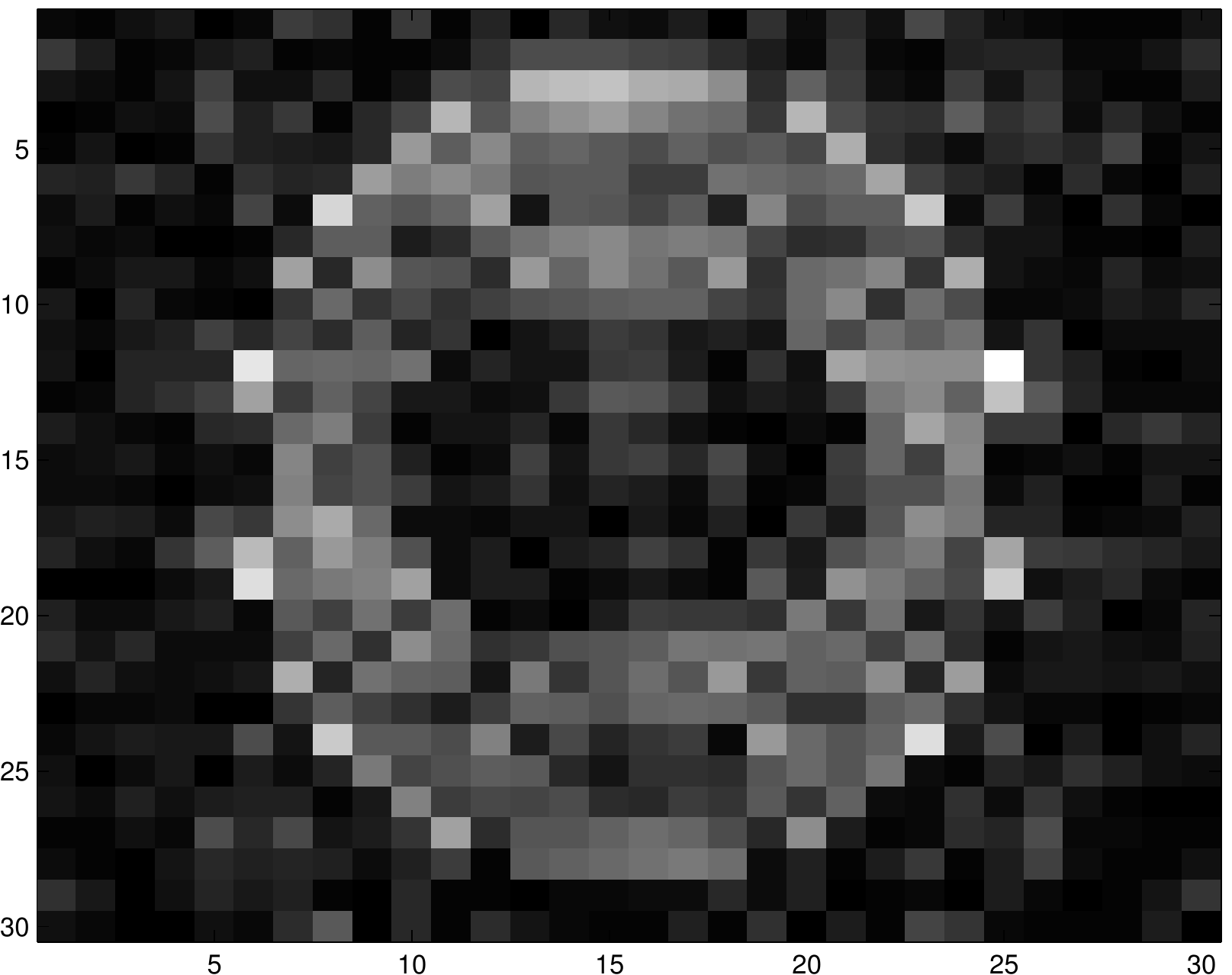}}
\subfigure[PhaseLift with $N=2400$] {\includegraphics[width = 0.4\columnwidth]{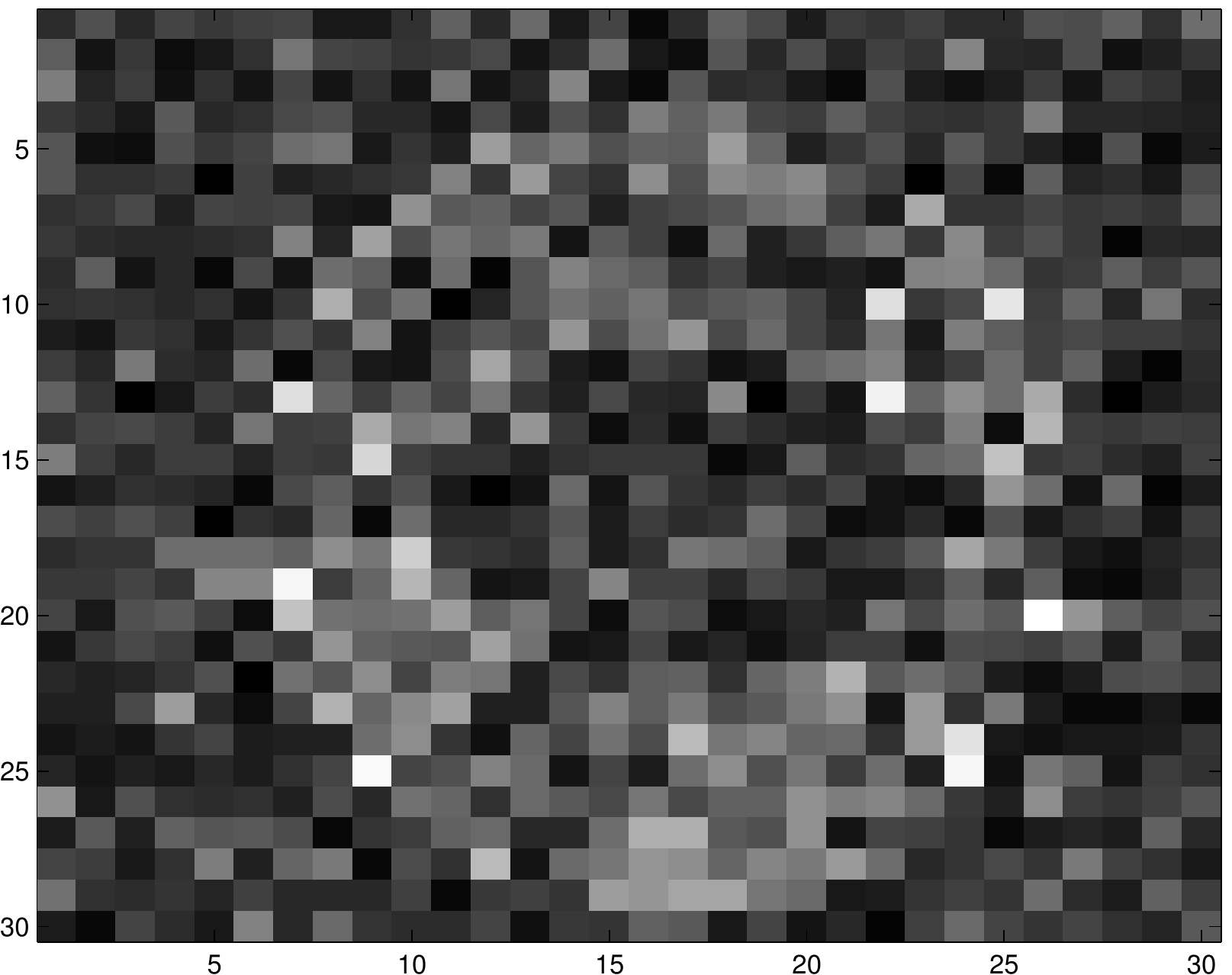}} \\
\subfigure[QBPD with  $N=2400$] {\includegraphics[width =0.4\columnwidth]{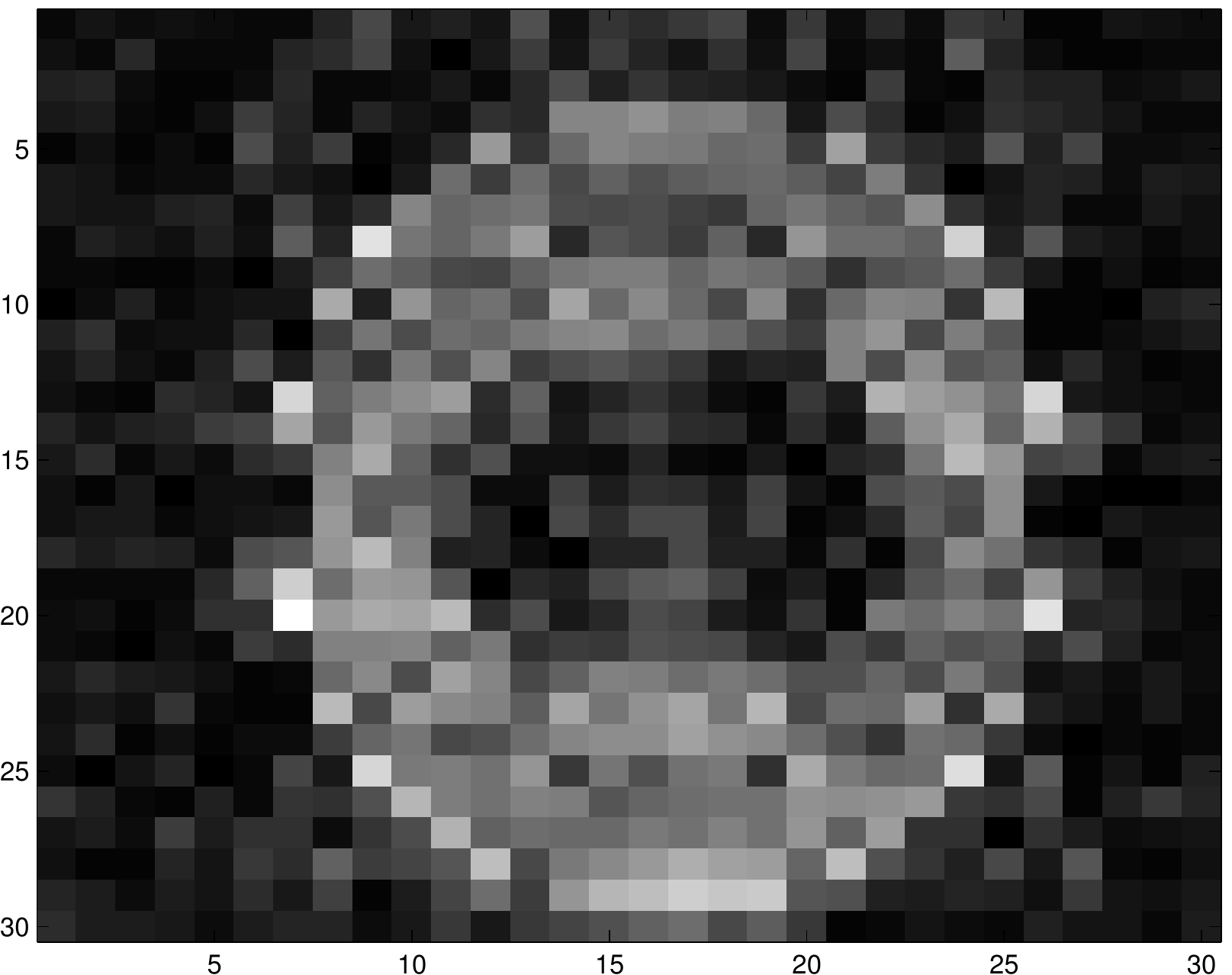}}
\subfigure[QBPD with $N=1500$] {\includegraphics[width = 0.4\columnwidth]{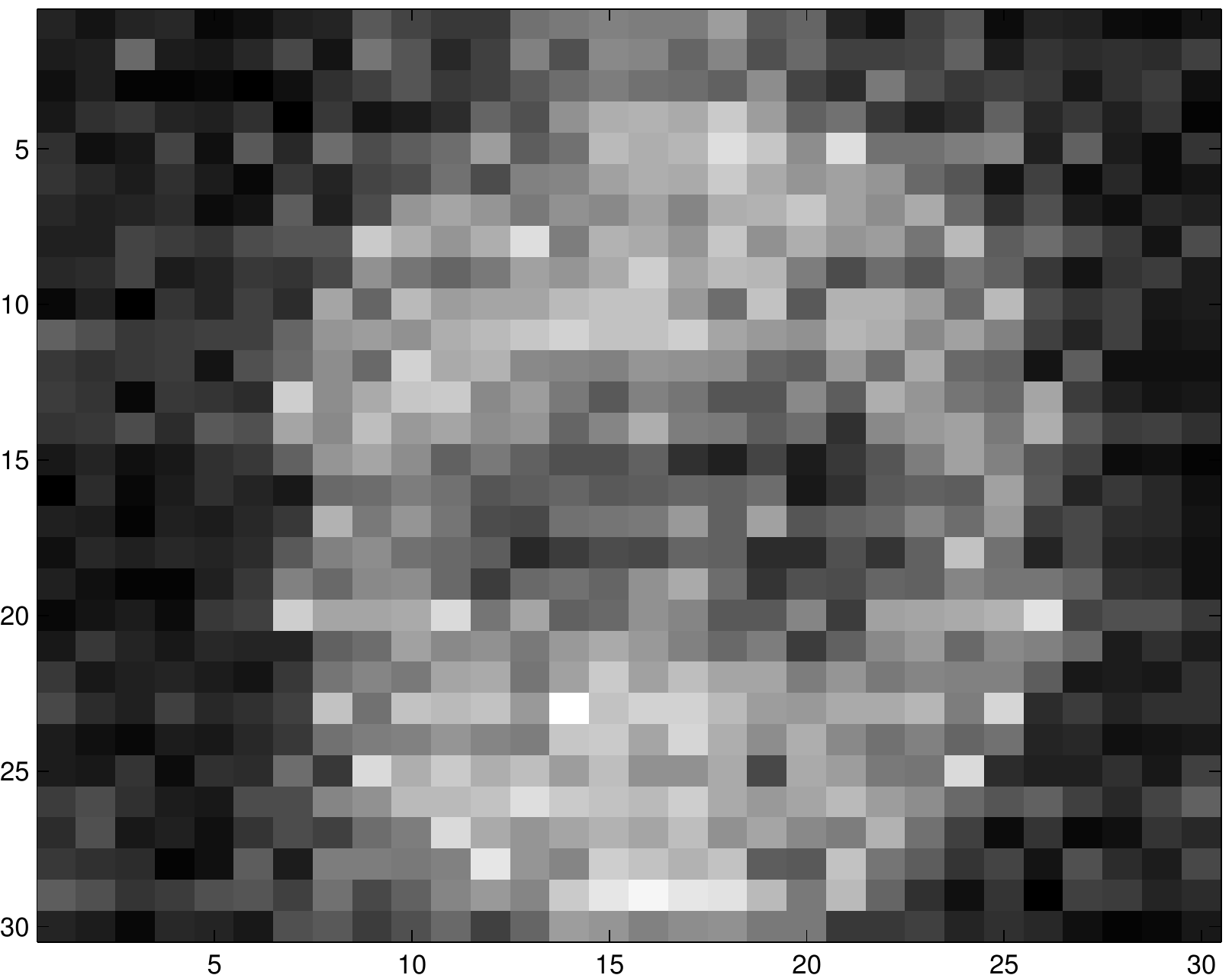}}
 \caption{Recovery of a Shepp-Logan Image by PhaseLift and QBPD.}
 \label{fig:shepplogan}
 \end{figure}

\subsection{Subwavelength Imaging}
\label{sec:subwavelength}
In this example, we present an example in sub-wavelength coherent diffractive imaging. The experiment and the data collection were conducted by
\cite{Szameit:12}. 

Let  $y_i,\,i=1,\dots,N,$ be intensity samples of a
2D diffraction pattern. The diffraction pattern is the result of a 532
nm laser beam passing through an arrangement of holes made on a
opaque piece of glass. The task is to decide the location of the holes out of a
number of possible locations. 

It can be shown that the relation between
the intensity measurements and the arrangements of holes is of the following type:
\begin{equation}
y_i=|\aa_i^\H \xx|^2,\quad i=1,\dots,N,
\end{equation}
where $y_i\in \Re,\,i=1,\dots,N,$ are intensity measurements,
$\aa_i\in \Ce^n,\, i=1,\dots,N,$ are known complex vectors and $\xx
\in \Re^n,$ is the sought entity, each element giving the likelihood
of a hole at a given location. 

We use QBPD
with $\varepsilon=0.0012$ and $\lambda= 100$. 89 measurements were selected  by
taking every 200th intensity measurement from the dataset of
\cite{Szameit:12}. The quantity $\xx$ is from the setup of the
experiment known  to be real and $a_i=\bb_i=\cc_i=0$.  We hence have
\begin{equation}
y_i=\xx^\T \Q_i \xx=|\aa_i^\H \xx|^2,\quad i=1,\dots,N,
\end{equation}
with $\Q_i=\aa_i \aa_i^\H \in \Ce^{n \times n},\,i=1,\dots,N,$ and
$\xx \in \Re^n$. 

The resulting estimate is
given to the left in Figure~\ref{fig:recovered}. 
The result deviates from the ground truth and the result presented in
\cite{Szameit:12}  (shown in Figure~\ref{fig:recovered} right), and it actually finds a more sparse pattern. 
It is interesting to note that both estimates are however within the noise level estimated in \cite{Szameit:12}:
\begin{equation}
\frac{1}{N}\sum_i^N (y_i -|\aa_i^\H \xx|^2 )^2 \leq 1.8 \times 10^{-6}.
\end{equation}
 Therefore, under the same noise assumptions, the two solutions are equally likely to lead to the same observations $\yy$.
 However, knowing that there is a solution within the noise
 level that is indeed sparser than the ground-truth pattern, it should \emph{not} be the optimal solution to have recovered the ground truth, since there exists a sparser solution.
 
 \begin{figure}[h!]
\centering
\includegraphics[width=0.5\columnwidth]{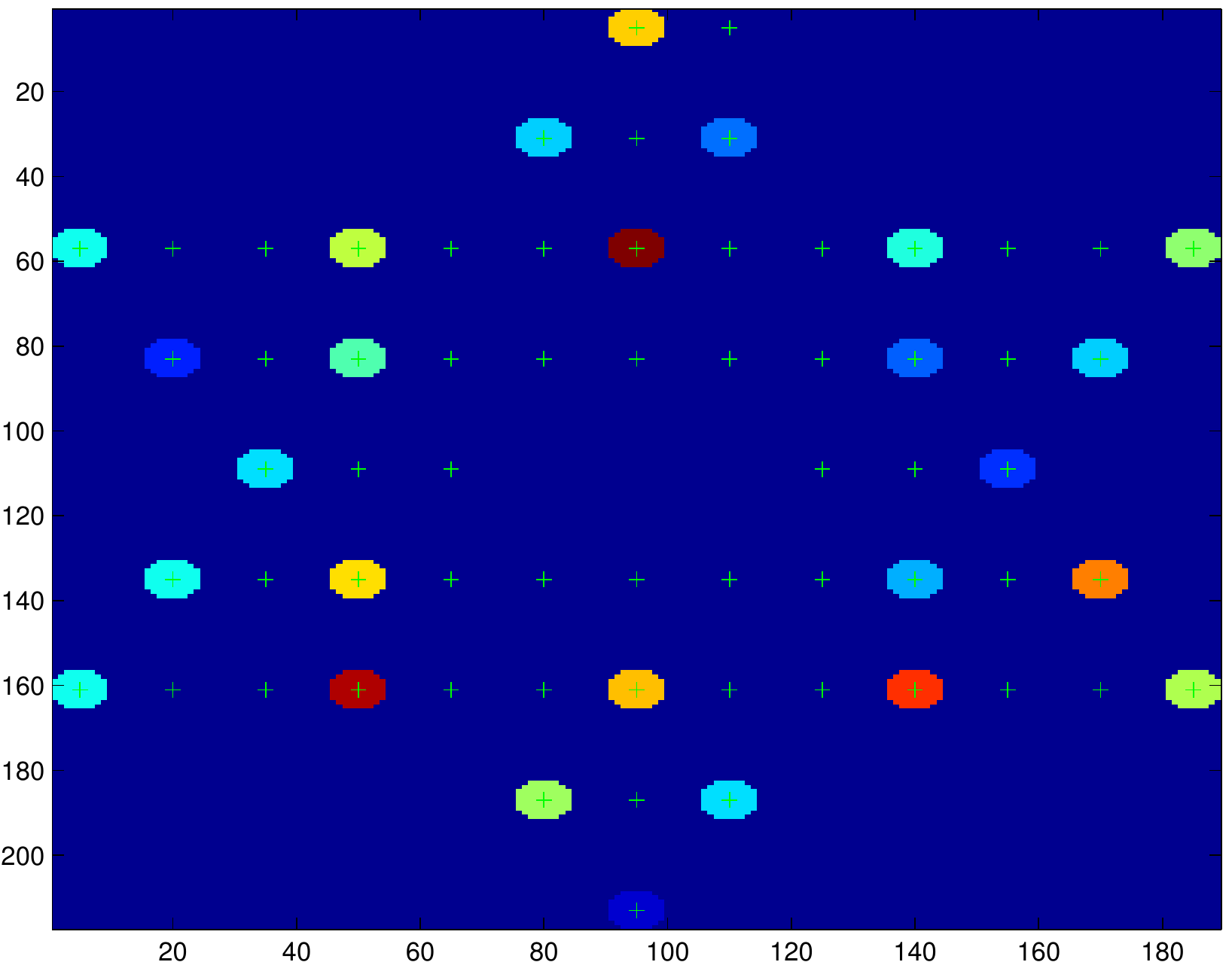}\includegraphics[width=0.5\columnwidth]{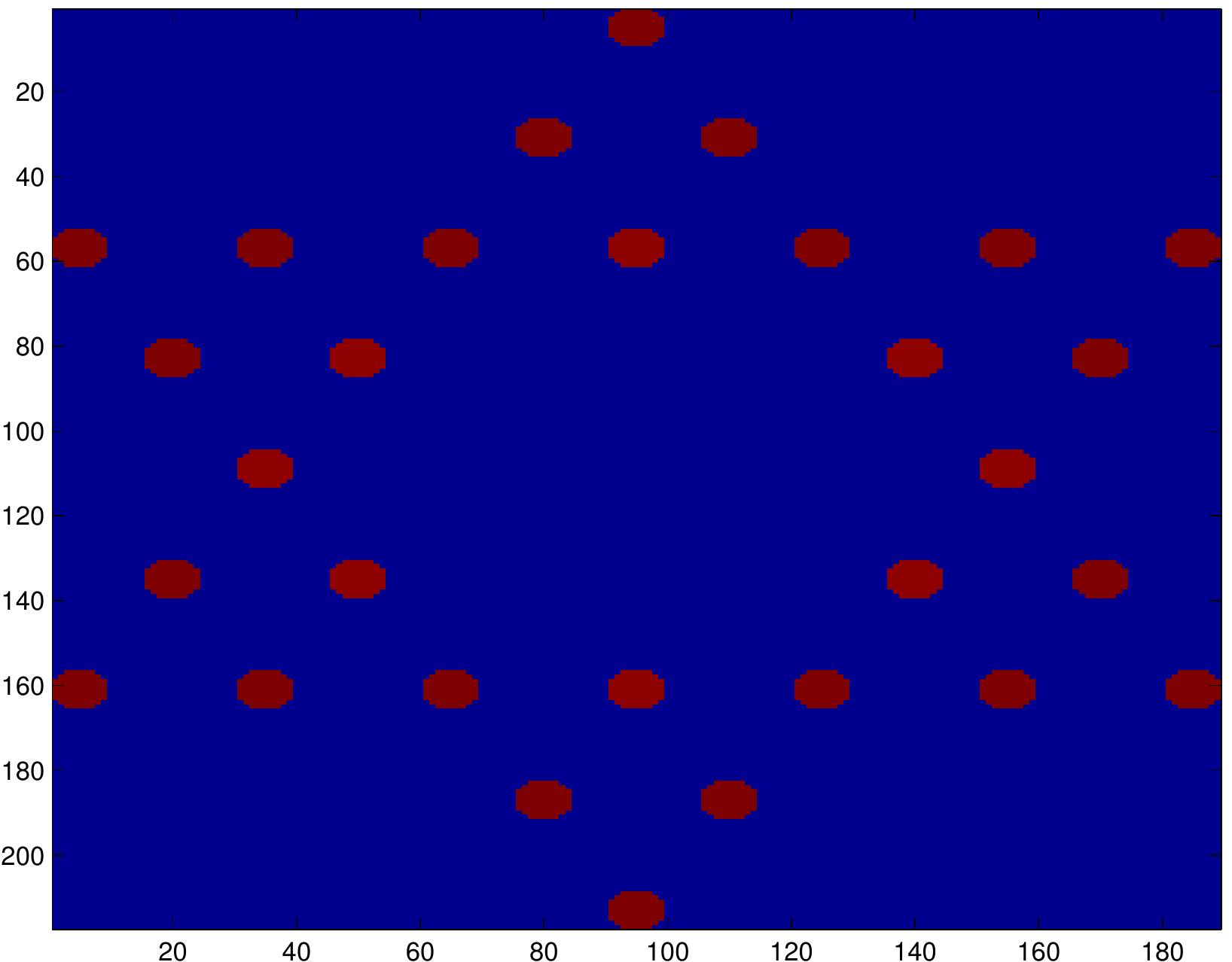}
\caption{The estimated sparse vector $\xx$. The crosses mark possible
  positions for holes, while the dots represent the recovered nonzero
  coefficients. {\bf Left:} Recovered pattern by QBPD.  Note that this
  estimate is sparser than the ground truth but within the estimated
  noise level. {\bf Right:} Recovered pattern by the compressive phase retrieval method used in \cite{Szameit:12}.}\label{fig:recovered}
\end{figure} 

\section{Conclusion}

Classical compressive sensing assumes a linear relation
between samples and the unknowns. The ability to more accurately
characterize nonlinear models has the potential to improve the results
in both existing compressive sensing applications and those where a
linear approximation does not suffice, \eg  phase retrieval.

This paper presents an extension of classical compressive sensing to
quadratic relations or second order Taylor expansions of the
nonlinearity relating measurements and the unknowns. The novel
extension is based on lifting and convex relaxations and the final
formulation takes the form of a  SDP. The proposed method, quadratic
basis pursuit,  inherits
properties of basis pursuit and classical compressive sensing and conditions for perfect
recovery etc are derived. We also give an efficient numerical implementation.

\section*{Acknowledgement}
The authors would like to acknowledge useful discussions and inputs
from Yonina C. Eldar,  Mordechai  Segev,
Laura Waller, Filipe Maia, Stefano Marchesini and Michael Lustig. We also
want to acknowledge the authors of \cite{Szameit:12} for kindly
sharing their data with us.

Ohlsson is partially supported by the Swedish Research
  Council in the Linnaeus center CADICS, the European Research Council
   under the advanced grant LEARN, contract 267381, by  a postdoctoral grant from the Sweden-America
   Foundation, donated by ASEA's Fellowship Fund, and by a postdoctoral
   grant from the Swedish Research Council. Yang is supported in part
   by ARO 63092-MA-II and DARPA FA8650-11-1-7153.
   
\bibliographystyle{IEEEtran}
\bibliography{refHO}
\end{document}